\pdfoutput=1

\documentclass{cccg25_hidden}
\usepackage{graphicx,amssymb,amsmath}
\usepackage{algorithm}
\usepackage[noend]{algorithmic}
\usepackage[hidelinks]{hyperref}
\usepackage{tikz}
\usepackage[export]{adjustbox}
\usetikzlibrary{decorations.pathreplacing, calc}

\newcommand{\lrParents}[1]{\left( #1 \right)}
\newcommand{\Oh}[1]{O\!\lrParents{#1}}
\newcommand{\OTheta}[1]{\Theta\!\lrParents{#1}}
\newcommand{\dist}[1]{\textnormal{dist}\!\lrParents{#1}}
\newcommand{\R}{\mathbb{R}}
\newcommand{\eps}{\varepsilon}

\newcommand{\T}{\mathcal{T}}
\newcommand{\LL}{\mathcal{L}}
\newcommand{\G}{\mathcal{G}}
\newcommand{\Z}{\mathcal{Z}}

\newcommand{\algoTwoRuntime}{nk \log (nk) \frac{1}{\eps}\left(\log(\Delta_{\T}/\max \{r^*,\rho\}) + k \right)}

%----------------------- Title -------------------------------------------

\title{Trajectory Minimum Touching Ball}

\author{Jeff M. Phillips\thanks{School of Computing, University of Utah, USA \texttt{jeffmp@cs.utah.edu}. Jeff M Phillips thanks support from NSF 2115677, 2311954, and NIH R37CA276365.} \and Jens Kristian Refsgaard Schou\thanks{Department of Computer Science, Aarhus University, Denmark \texttt{jkrschou@gmail.com}. Jens Kristian R. Schou thanks support from Independent Research Fund Denmark (DFF), grant~9131-00113B}}

\index{Phillips, Jeff M}
\index{Schou, Jens Kristian R}

%------------------------------ Text -------------------------------------

\begin{document}
\thispagestyle{empty}
\maketitle

\begin{abstract}
    We present algorithms to find the minimum radius sphere that intersects every trajectory in a set of $n$ trajectories composed of at most $k$ line segments each. When $k=1$, we can reduce the problem to the LP-type framework to achieve a linear time complexity.  For $k \geq 4$ we provide a trajectory configuration with unbounded LP-type complexity, but also present an almost $O\left((nk)^2\log n\right)$ algorithm through the farthest line segment Voronoi diagrams.  If we tolerate a relative approximation, we can reduce to time near-linear in $n$.  
\end{abstract}

\section{Introduction}
\label{sec:intro} 

A spatial trajectory is one of the most common non-trivial spatial geometric object (after simplistic points).  It can capture the movement path of people~\cite{sila2016analysis}, wild animals~\cite{kranstauber2011movebank}, vehicles~\cite{matheny2020scalable}, and other objects in a spatial domain; and collecting such data has, in the past decade, has become exponentially easier due to the proliferation of cheap mobile devices with reliable GPS, batteries, and either internal storage or internet connections for cloud storage.  But analyzing such data can be a tangled mess.  

This paper provides algorithms and analysis for one such natural challenge.  Consider a set of identified trajectories $\T$, of which we want to investigate. 
For instance, these could be people who's cell phones were infected by a compromised WiFi router, or animals that got sick from an unknown watering hole, or vehicles that need an electric charging station.  If the spatial event which caused such trajectories to be flagged is unknown, then a goal is to find the most likely region.  We define this (formalized mathematically below) as the smallest (circular) region which all marked trajectories in $\T$ passed through.

In particular, this paper formalizes this problem, and provides 3 algorithmic results. 
\begin{itemize}
    \item Given that the goal region is a convex disk, we ask if we can formulate this as an LP-type problem.  We show that this only works for simple trajectories which are $1$ segment long.  
    \item Next we consider any exact algorithms.  We reduce the furthest point Voronoi diagram, and provide $\Oh{(\alpha(n)+k)n^2k\log n}$ time algorithm in $\R^2$ where there are $n = |\T|$ trajectories, and each is at most $k$ segments, and $\alpha(\cdot)$ is the inverse Ackermann's function.  
    \item  Finally, we consider approximation algorithms in $\R^2$, and show how to find a ball which intersects all trajectories and $(1+\eps)$-approximates the optimal radius up to a minimum added error $\rho$ in $\Oh{\algoTwoRuntime}$ time, where $\Delta_{\T}$ is the diameter of all points in $\T$. 
\end{itemize}

\section{Preliminaries}
\label{sec:prelim}

We encode a trajectory $T$ as an ordered set of $k+1$ waypoints $p_1, p_2, \ldots, p_{k+1} \in \R^d$.  These in turn define an ordered set of $k$ connected line segments $\ell_1, \ell_2, \ldots, \ell_k$ where $\ell_j = \overline{ p_j p_{j+1} }$.
In particular, we can parameterize a point on each segment as $\ell_j(\lambda) = (1-\lambda)p_j + \lambda p_{j+1} \in \R^d$ for $\lambda \in [0,1]$.  
We then represent the full trajectory $T$ as the union of all points $\ell_j(\lambda)$ for $\lambda \in [0,1]$ and $j \in 1,2,\ldots,k$.    

We can handle trajectories of various lengths $k$.  However, for notational convenience, we typically restrict our discussion to a set of trajectories $\T$ where each $T \in \T$ has the same number of segments $k$.  When $k=0$, all trajectories are points.  When $k=1$, all trajectories are line segments.

\paragraph{The minimum touching ball.}
Recall a ball $B_r(c) = \{ x \in \R^d \mid \|x-c\| \leq r\}$ is all points within Euclidean distance or radius $r$ of center point $c \in \R^d$.  

Given a geometric object (a closed set) $Z \subset \R^d$, we say it intersects, or \emph{touches}, a ball $B_r(c)$ if there exists a point $x \in Z \cap B_r(c)$.  Given a set $\Z$ of geometric objects, the \emph{minimum touching ball (MTB)} $B_{r^*}(c^*)$ is the minimum radius ball that touches all $Z \in \Z$.  
Note that there may be multiple balls with the same minimal radius $r^*$ (e.g., when $\Z$ is comprised of two parallel line segments).  Thus, it is technically a \emph{minimal} touching ball, but we usually discuss it as if it is unique.  

We are particularly interested in the MTB problem, where $\Z$ is a set $\T$ of trajectories of length $k$.  We call this the \emph{trajectory minimum touching ball (TMTB)} problem.  Let $r^*(\T)$ is the minimal radius touching ball for $\T$; often we use $r^*$ when the context is clear.  

We also consider and approximate version of this problem.  For $\varepsilon, \rho \geq 0$, the \emph{$(\varepsilon,\rho)$-approximate TMTB} is a ball $B_r(c)$ which touches all $T \in \T$, and $r \leq (1+\eps)\max \{r^*(\T), \rho\}$.  That is, if the minimal radius $r^*(\T) \geq \rho$, then the goal is to find a $(1+\eps)$-relative error approximation and if not to find a radius $\rho$ MTB.

\section{Reduction to LP-Type Problems}
\label{sec:LPt}

An LP-type problem~\cite{lptype}, is a combinatorial optimization problem that can in many ways, especially in low dimensions, be solved with the same algorithms as linear programming.  
It takes as input a set of constraints $S$, and an objective function $f : 2^S \to \R$ satisfying two axioms: \emph{monotonicity} and \emph{locality}.  These are defined with respect to any nested sequence of constraints $B \subset Y \subset S$, and then any particular constraint $x \in S$:  
\begin{itemize}
    \item \underline{Monotonicity:} $f(B) \leq f(Y)$
    \item \underline{Locality:} $f(B \cup \{x\}) > f(B) = f(Y)$ implies \\ \phantom{Locality:} $f(Y \cup \{x\}) > f(Y)$
\end{itemize}

The goal of an LP-type problem is to compute $f(S)$, but this is often opaque given a large set of constraints $S$.  This phrasing is useful when it is efficient for a small set $B$ to compute $f(B)$.  Then if one can identify the smallest set $B \subset S$ such that $f(B) = f(S)$, one can efficiently compute $f(S)$.  
For any $Y \subseteq S$, the minimal set $B \subseteq Y$ with $f(B) = f(Y)$ is called its \emph{basis}.  The maximal possible basis size is called the \emph{combinatorial dimension} of the LP-type problem.  
For example, in linear programming in $\R^d$ then $f(Y) = \max_z \langle u,z\rangle$ and we need to satisfy all linear constraints $x \in Y$ encoded as $\langle \alpha_x,z\rangle \geq \beta_x$.
Here the optimum is defined by at most $d$ constraints, so the combinatorial dimension is $d$.  

Assuming the combinatorial dimension is an absolute constant, several algorithms~\cite{seidel,Clarkson,MatousekSharirWelzl} can compute a minimal basis set $B$ in expected time linear in the number of constraints $|S|$, and thus, compute $f(S)$ in expected linear time in $|S|$.  

Famously, the minimal enclosing ball (MEB) problem~\cite{MatousekSharirWelzl} is LP-type. The constraints are a set of points $x \in \R^d$ which must be contained in a ball $B_r(c)$, and $f(S)$ is the minimal radius for which there exists such a ball that satisfies the constraints $S$. The combinatorial dimension is $d+1$, so in constant dimension $d$, this gives an expected linear time solution in the number of points.  
Specifically, for MEB, linear program $\R^d$ for constant $d$, or any LP-type problem with combinatorial dimension $d$, then a randomized algorithm that combines methods of \cite{Clarkson,Kalai,MatousekSharirWelzl} computes such a solution in $\Oh{d^2n+e^{\Oh{\sqrt{d\log d}}}}$ arithmetic operations, in expectation. 

A simple consequence of this is that for $k=0$, the TMTB problem has trajectories as points, and is the MEB problem.  This gives an expected $\Oh{n}$ time solution in $\R^d$ for constant $d$.  
A natural question, that we answer in the negative in this paper, is if this can extend to the general length $k$ trajectory problem.  

Most optimization problems can be phrased within the LP-type framework. However, in some cases, the minimal basis includes all constraints, resulting in an unbounded dimension. In such cases standard LP-type solver time bounds are exponential in the input size. Thus to invoke LP-type solvers, it is paramount to bound the combinatorial dimension as constant.  

\subsection{Segment TMTB reduction to LP-type}
\label{sec:segmentLP}

As a natural first step we consider trajectories with $k=1$, so each $T \in \T$ is a single line segment. In this setting, the TMTB and the MEB no longer coincide.  

To analyze line segment MTB, we leverage Amenta's connection between LP-type problems (there called generalized LP), and Helly-type problems~\cite{Amenta1994}.  She includes a result that for a family $\mathcal{K}$ of convex objects, and a fixed convex object $C$, finding the smallest homothet of $C$ intersecting every member $K \in \mathcal{K}$ is an LP-type problem. And the basis size is constant if each element $K \in \mathcal{K}$ has constant description complexity.  If these objects are  in $\R^d$, then the combinatorial dimension is $d+1$.   In this framework, we let the homothets of $C$ be the family of all balls, and size parameterized by radius. When each $T \in \mathcal{T}$ is a segment, it is convex so $\mathcal{K} = \mathcal{T}$ and we can conclude the following:

\begin{lemma}
    Line segment MTB in $\R^d$ is an LP-type problem with combinatorial dimension $d+1$.
\end{lemma}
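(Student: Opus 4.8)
The plan is to invoke Amenta's theorem directly, after checking that its hypotheses are met in our situation. Recall that Amenta's result (\cite{Amenta1994}) states the following: given a family $\mathcal{K}$ of convex sets in $\R^d$ and a fixed convex ``template'' $C$, the problem of finding the smallest positive scaling factor $t$ such that some translate of $tC$ intersects every $K \in \mathcal{K}$ is an LP-type problem, and when every $K \in \mathcal{K}$ has constant description complexity the combinatorial dimension equals $d+1$. So the task reduces to phrasing line segment MTB in exactly this form and verifying the two structural requirements: convexity of the constraints, and that homothets of the template capture balls of arbitrary radius and center.

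First I would set the template $C = B_1(0)$, the unit ball centered at the origin, so that its homothets are exactly $\{ t C + c : t > 0, c \in \R^d \} = \{ B_t(c) : t > 0, c \in \R^d\}$ --- i.e., all balls of all radii and all centers, with the scaling parameter $t$ being precisely the radius. Next I would take the family $\mathcal{K} = \T$: since each $T \in \T$ is a single line segment (the $k=1$ case), it is a convex subset of $\R^d$, so $\mathcal{K}$ is indeed a family of convex sets. A line segment is specified by its two endpoints, hence has constant description complexity. Then ``the smallest homothet of $C$ intersecting every $T \in \T$'' is by construction the smallest-radius ball touching every trajectory, which is exactly the line segment MTB; and ``smallest'' here means smallest radius, matching our objective. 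Applying Amenta's theorem to this instance immediately yields that line segment MTB is LP-type with combinatorial dimension $d+1$.

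The only genuine thing to be careful about --- and what I'd expect to be the one subtle point rather than a true obstacle --- is the correspondence between the scaling parameter in Amenta's framework and the radius in our problem, together with the edge behavior: one must confirm that ``intersects'' (nonempty intersection of the closed ball with the closed segment) is the relevant notion in both settings, that the infimum over valid radii is attained (so that an actual minimal touching ball exists, as discussed in the Preliminaries via compactness of the segments and continuity of distance), and that degeneracies such as two parallel segments --- where the minimizer is non-unique --- do not violate the LP-type axioms, since monotonicity and locality are statements about the optimal \emph{value} $f(\cdot)$ and not about the optimal ball. Monotonicity is immediate: enlarging the set of segments can only force the minimal touching radius to weakly increase. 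Locality follows from Amenta's general argument, or can be seen directly: if adding segment $x$ to a smaller set $B$ strictly increases the optimal radius beyond the common value $f(B) = f(Y)$, then the ball realizing $f(Y)$ already fails to touch $x$ (it would touch every segment in $B$ but radius would need to grow for $x$), so $f(Y \cup \{x\}) > f(Y)$ as well. Hence the lemma follows.
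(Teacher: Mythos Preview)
Your proposal is correct and follows essentially the same approach as the paper: both invoke Amenta's result by taking $C$ to be the unit ball (so homothets are all balls) and $\mathcal{K}=\mathcal{T}$ (since segments are convex with constant description complexity), concluding combinatorial dimension $d+1$. Your version simply adds more care about edge cases (attainment of the infimum, non-uniqueness, and a direct check of monotonicity and locality) that the paper leaves implicit in the citation.
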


The next natural question is if this argument extends to trajectories with more than one line segment.  
The reduction via Helly's theorem~\cite{Amenta1994} crucially relies on convexity.  But the distance function to trajectories of size $k \geq 2$ is not convex.  
On the other hand, Amenta presents a case of sets $\mathcal{B}_2$ where each $K  \in \mathcal{B}_2$ is two sufficient separated balls (hence $K$ is not convex), yet finding the smallest ball intersecting each $K$ is still LP-type with combinatorial dimension $2d+1$.  So convexity is not \emph{required} for this formulation to work.

\subsection{General TMTB \underline{does not} have Bounded Combinatorial Dimension}
\label{sec:notLP}

When trajectories are allowed to have at least $k=4$ segments, we can construct a configuration in which every trajectory is essential to defining the optimal solution of the TMTB. 
This implies that the LP-type dimension is unbounded, and the LP-type framework does not lead to efficient solutions.  

\begin{lemma}\label{lemma:traj_lp_monster}
    The combinatorial dimension of TMTB with $k=4$, as an LP-type problem, is unbounded.
\end{lemma}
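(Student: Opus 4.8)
The plan is to construct, for every $n$, a configuration of $n$ trajectories (each with $k=4$ segments) in $\mathbb{R}^2$ whose minimum touching ball has the property that the unique optimal ball $B_{r^*}(c^*)$ shrinks strictly if \emph{any single} trajectory is removed. If this holds, then the basis of the LP-type problem must contain all $n$ constraints, so the combinatorial dimension is at least $n$; letting $n\to\infty$ gives the claim. The design target is therefore: each trajectory $T_i$ should touch the optimal ball at exactly one point $x_i$ on its boundary, and each $x_i$ should be the \emph{only} point of $T_i$ within distance $r^*$ of $c^*$ — so that dropping $T_i$ releases the constraint that pins the ball from the direction of $x_i$.

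First I would fix the optimal ball to be the unit circle $S = \partial B_1(0)$ and choose $n$ contact points $x_1,\dots,x_n$ spread around $S$ (say, at angles $2\pi i/n$). The core gadget is this: I want a $4$-segment polygonal path that passes through $x_i$, is tangent to $S$ there (so locally it does not enter the open disk), and then immediately veers \emph{outward}, staying strictly outside $B_1(0)$ everywhere except at $x_i$, while also not crossing any slightly smaller circle $B_{1-\delta}(c')$ for any admissible perturbation $c'$ of the center. Concretely, from $x_i$ I would send two segments tangentially away along $S$ in both directions for a short arc-length, then two more segments that bend radially outward to carry the endpoints $p_1,p_{i,5}$ well away from the disk. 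Four segments is exactly enough to realize "come in tangentially from one side, touch, leave tangentially on the other side, and escape outward" — this is presumably why the bound is stated for $k=4$. The key local estimate is that a tangent line to $S$ at $x_i$ has second-order contact with the circle, so a path that hugs the tangent direction only briefly before curving strictly outward stays outside $B_{1}(0)\setminus\{x_i\}$; I would verify this with an elementary distance-to-center computation along each of the four segments.

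Next I would argue optimality and rigidity. Since every $T_i$ touches $B_1(0)$ (at $x_i$), the ball $B_1(0)$ is a touching ball, so $r^*\le 1$. For the reverse and for uniqueness, I would show that no ball of radius $<1$ touches all $T_i$: because the contact points $x_1,\dots,x_n$ are distributed around $S$ so that $0$ lies in their convex hull (this needs $n\ge 3$ and roughly uniform spacing), any center $c'\ne 0$ moves \emph{away} from at least one $x_i$, and — crucially — by the local construction the nearest point of $T_i$ to $c'$ is still $x_i$ for $c'$ in a neighborhood of $0$ (the rest of $T_i$ is strictly farther, by a uniform gap $\gamma>0$ I build into the "escape outward" segments). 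Hence $\mathrm{dist}(c',T_i)=\|c'-x_i\|$ for such $c'$, and pushing $c'$ off $0$ forces this quantity above $1$ for some $i$ unless we also grow the radius. A standard convex-hull/supporting-hyperplane argument then pins $(c^*,r^*)=(0,1)$ as the unique minimum touching ball.

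Finally, the removal argument: drop $T_j$. Now only the contact points $\{x_i : i\ne j\}$ constrain the ball. I would choose the original $n$ points so that for each $j$, the arc of $S$ "opposite" to $x_j$ still has points on both sides, but $x_j$ itself is an exposed vertex of $\mathrm{conv}\{x_i\}$ — e.g. uniform spacing makes each $x_j$ a vertex. Then shifting $c'$ a tiny amount toward $x_j$ (and slightly shrinking $r$) keeps all remaining trajectories touched: for $i\ne j$, $x_i$ either gets closer or, if it gets farther, only by $o(\text{shift})$ while we had slack, and no other part of $T_i$ has come within the new smaller radius because of the uniform gap $\gamma$. This yields a touching ball of radius $<1$ for $\mathcal{T}\setminus\{T_j\}$, so $T_j$ is essential, for every $j$. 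The main obstacle I anticipate is the second bullet point above — controlling the \emph{global} geometry of each $4$-segment trajectory so that it genuinely stays outside every relevant slightly-perturbed ball except at its single contact point; this forces careful quantitative choices of the segment lengths, the turning angles, and the separation $\gamma$, all as explicit functions of $n$, and checking that these choices remain mutually consistent as $n$ grows is where the real work lies. I would handle it by making the "tangential hug" length shrink like $\Theta(1/n)$ so the $x_i$ are well-separated on $S$, and the "outward escape" push a fixed constant, giving a $\gamma$ bounded below independent of $n$.
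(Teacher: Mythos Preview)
Your removal argument has a gap that breaks the construction. In your design every trajectory $T_i$ realizes $\mathrm{dist}(0,T_i)=1$ exactly, so there is no slack at the optimum. When you delete $T_j$ and shift the center by $\epsilon$ toward $x_j$, the trajectory $T_i$ whose contact point $x_i$ is roughly antipodal to $x_j$ does not recede by $o(\epsilon)$: whether the local feature of $T_i$ at $x_i$ is a short tangent segment or an outward-opening corner, $\mathrm{dist}(c',T_i)$ grows by $\Theta(\epsilon)$ (to first order it equals $1-\langle c',x_i\rangle$ in the tangent case and $\|c'-x_i\|$ in the corner case), so the required radius goes up, not down. More structurally, near $c=0$ your problem coincides with either a halfplane LP (tangent lines through the $x_i$) or the MEB of the points $x_i$ on the unit circle; both have combinatorial dimension at most $3$ in $\mathbb{R}^2$. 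For $n\ge 4$ evenly spaced contacts, removing any single $x_j$ leaves $0$ in the convex hull of the remaining $x_i$, and the optimal radius stays at $1$. Hence at most three of your trajectories can be essential, and the construction does not witness unbounded dimension.

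The paper avoids this by abandoning the ``one tangential contact per trajectory'' picture entirely. Its trajectories are long horizontal runs at height $1$, each interrupted by a tall peak at a different $x$-coordinate, sitting above a slanted base segment; the peaks make each $c\mapsto\mathrm{dist}(c,T_i)$ genuinely nonconvex on the relevant scale, and the staggering of the peaks creates a chain of obstructions in which each $T_i$ blocks a specific candidate center location rather than all being tight at a common optimum. Removing any one $T_i$ opens a gap into which the ball can slide and strictly shrink. If you want to rescue a symmetric construction, each $T_i$ must do something nonlocal to the optimal ball (pass near it in one place but obstruct a competing center elsewhere); four segments that only hug and escape at a single contact point $x_i$ cannot achieve that.
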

\begin{proof}Let $n > 4$. We construct a set of $ n $ trajectories $ \mathcal{T} $ in $ \mathbb{R}^2 $ such that for every trajectory $ x \in \mathcal{T} $, removing $ x $ strictly reduces the minimum enclosing radius, i.e.,
$$
f(\mathcal{T} \setminus \{x\}) < f(\mathcal{T}),
$$
where $ f(\cdot) $ denotes the TMTB radius function. This guarantees that all $ n $ trajectories must be part of any LP-type basis, implying a combinatorial dimension of at least $ n $.

The construction proceeds as follows; see Figure~\ref{fig:traj_lp_monster} for an illustration.  
First, define a fixed line segment $ T_0 $ that serves as the base. It is slanted and extends from the point $ (0, 0) $ to $ (n + 2.5, -0.5) $. This segment helps anchor the MTB from below.
Next, define a special trajectory $ T_1 $, which starts high and descends toward the base. It begins at $ (0, 4) $, descends through the point $ (3.5, 1) $, and ends at $ (n + 2.5, 1) $. This trajectory stabilizes the top and the right side of the MTB.

The remaining $ n - 2 $ trajectories form a sequence of nested arches. For each $ 2 \leq i \leq n - 1 $, define the trajectory $ T_i $ as:
$$
(0, 1) \rightarrow (i, 1) \rightarrow \left(i + \tfrac{2.5}{2}, 4\right) \rightarrow (i + 2.5, 1) \rightarrow (n + 2.5, 1).
$$
Each such trajectory starts at height 1, rises to height 4, and returns to height 1, forming a peaked shape. These paths create vertical obstructions that prevents a small MTB; a ball with the $x$-coordinate of its center at $i + 2.5$ will be sufficiently far from $T_i$ that it will be beneficial to move to a larger $x$ after all trajectors return from their peaks, as shown in Figure \ref{fig:traj_lp_monster} and Appendix~\ref{apx:lemma2figs}.  

In this setup, every trajectory is essential to defining the optimal MTB. If any trajectory $T_i$ is removed, the touching ball can be given an $x$-value of its center at $i+\frac{2.5}{2}$ and touch both $T_{i-1}$ and $T_{i+1}$ without expanding its radius too much.   In particular, if $T_0$ is removed, all remaining trajectories intersect, and the MTB radius drops to zero. If $T_1$ is removed, the center moves to $(0, 0.5)$ with radius $0.5$. For $i > 1$ each trajectory $T_i$ is carefully placed to block a tighter ball from forming around the previous trajectories. Removing $T_i$ allows the TMTB to be defined by earlier trajectories $T_j$ for $j < i$. The slant of $T_1$ ensures that the MTB in the full configuration has to increase to a larger $x$-value around $n+2.5$. Thus, no trajectory is redundant, and all trajectories must be included in any LP-type basis, making the combinatorial dimension unbounded, as no small subset can represent the entire set.
\end{proof}

\begin{figure}
    \centering
    % [trim={left bottom right top},clip]
    \includegraphics[trim={2.4cm 2.9cm 1.75cm 2.9cm},clip,scale = 0.7]{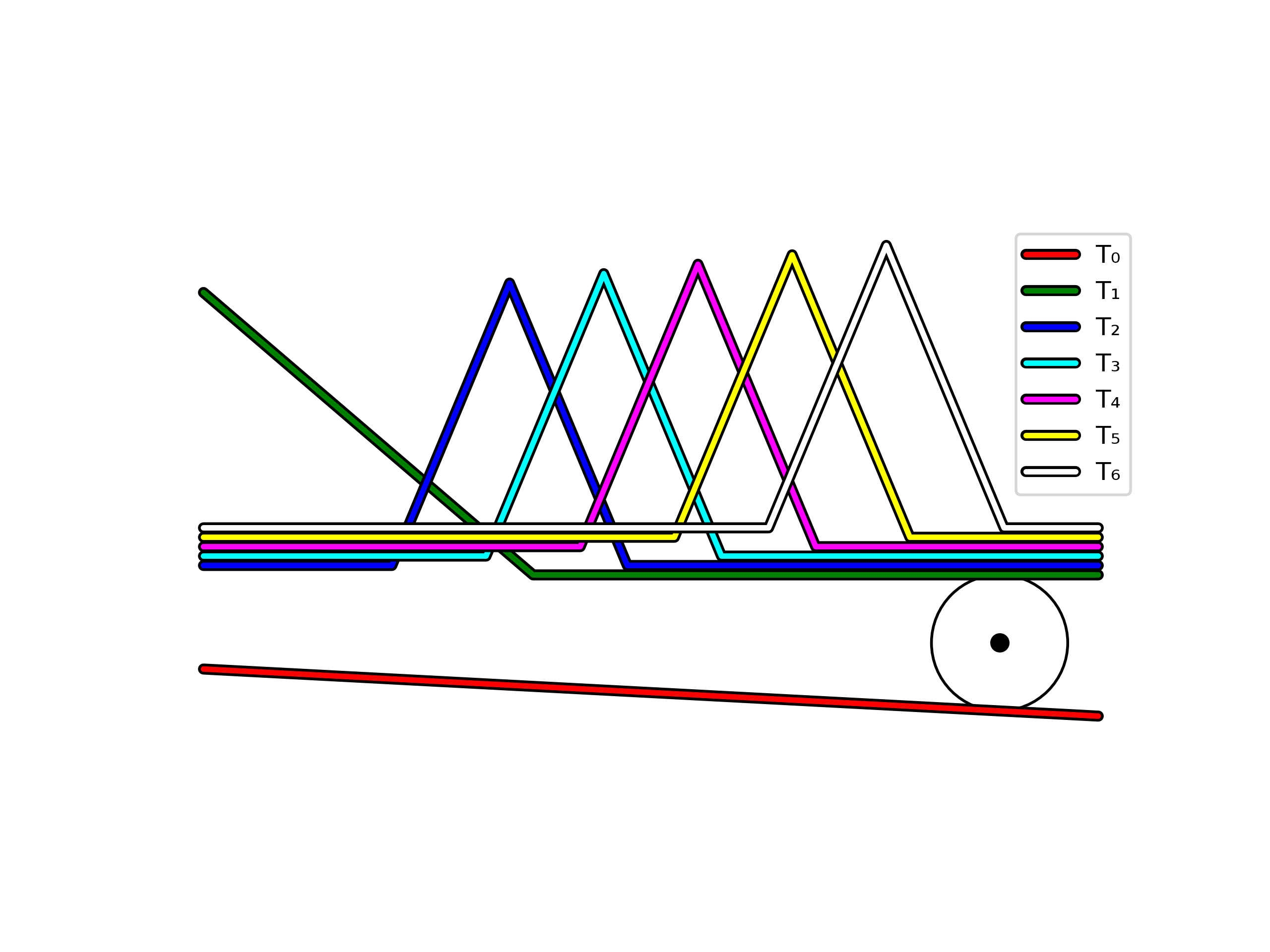}
    \caption{The trajectory configuration of Lemma~\ref{lemma:traj_lp_monster} that has unbounded LP-type dimension, each trajectory has a different color and the TMTB is visualized in black. The trajectories are supposed to be on top of each other on the horizontal segment, but for clarity of presentation, they are raised.} 
    \label{fig:traj_lp_monster}
\end{figure}

\section{Furthest Trajectory Voronoi diagrams}
\label{sec:TVD}
In this section, we use the Farthest-Color Line Segment Voronoi Diagrams (FCLVD) of Bae~\cite{Bae_farthest_voronoi} to define Farthest Trajectory Voronoi Diagrams (FTVD) and show that the optimal TMTB center must be on one of its edges or vertices.  

Bae~\cite{Bae_farthest_voronoi} considered a set of $N$ total line segments $\LL$ in $\R^2$ in $K$ different colors, with the general position assumption that no three segments cross in at a single point. Let the $j$-colored line segments be denoted as $\LL_j$. %\jk{We define this dist in two places, I like 
Let 
\[
\dist{x,\LL_j} = \min_{z \in \ell; \ell \in \LL_j} \|x-z\|
\]
be the closest point from query $x$ to some point $z \in \ell$ for any line segment $\ell \in \LL_j$ of color $j$.  
Then the \emph{furthest color line segment Voronoi Diagram (FCLVD)} is a decomposition of $\R^2$ into regions for each color $j$ as $D_j = \{x \in \R^2 \mid \dist{x,\LL_j} > \dist{x, \LL_{j'}}
\text{ for } j'\neq j\}$, so $x$ is further from any point in $\LL_j$ than to any other set of colored segments $\LL_{j'}$.  
Bae~\cite{Bae_farthest_voronoi} showed that if there are a total of $h = \Oh{N^2}$ segment crossings, then the FCLVD can be constructed in time $\Oh{(NK+h)(\alpha(K)\log K + \log N)}$, where $\alpha(k)$ is the inverse Ackerman function.  
The combinatorial complexity of a FCLVD is the sum of the number of cells, edges, and vertices of the diagram.  The \emph{cells} are the maximal connected components among the $D_j$ regions.  The \emph{vertices} are the points $x \in \R^2$ where there are $3$ equally close line segments $\ell_1, \ell_2, \ell_3$; each which could serve as the furthest colored line segments.  It could generally be the case that two of the three segments have the same color.
And the \emph{edges} are the components of the boundaries between two cells $D_j$ and $D_{j'}$; the locus of points $x \in \R^2$ so that $\dist{x, \ell} = \dist{x, \ell'}$ which are the closest line segments from the two furthest colored sets $\ell \in \LL_j$ and $\ell' \in \LL_{j'}$.  The boundaries of the edges are vertices; although not all edges may have (both) boundaries.  
Bae showed the worst-case combinatorial complexity is $\OTheta{NK+h}$.  
In any construction, each vertex, edge, and face can be assigned its \emph{generators}, the $3$, $2$, or $1$ segments (respectively; and its color) which determine its geometry.  

Now we define the \emph{furthest trajectory Voronoi diagram (FTVD)} as an extension of the FCLVD.  Here we consider $n$ trajectories $\T$, each composed of at most $k$ line segment in $\R^2$.  The FTVD is again an decomposition of $\R^2$ so for each $T_j \in \T$, it defines regions $D_j = \{x \in \R^2 \mid \dist{x,T_j} < \dist{x,T_{j'}} \text{ for } j \neq j' \}$.  There are $K = n$ colors (one for each trajectory $T_j$), and at most $N = kn$ total line segments.  Hence the total combinatorial complexity is $\Oh{k^2n^2}$ since there are at most ${nk \choose 2} = \Oh{n^2k^2}$ intersections.  And using Bae's algorithm the FTVD can be constructed in $\Oh{(\alpha(n) +k)n^2 k \log n}$ time.  
Figure~\ref{fig:traj_FTVD} shows a color-coded FTVD with the TMTB in black.  

\begin{figure}
    \centering
    \includegraphics[trim={2.4cm 3.1cm 2cm 2.9cm},clip,scale = 0.7]{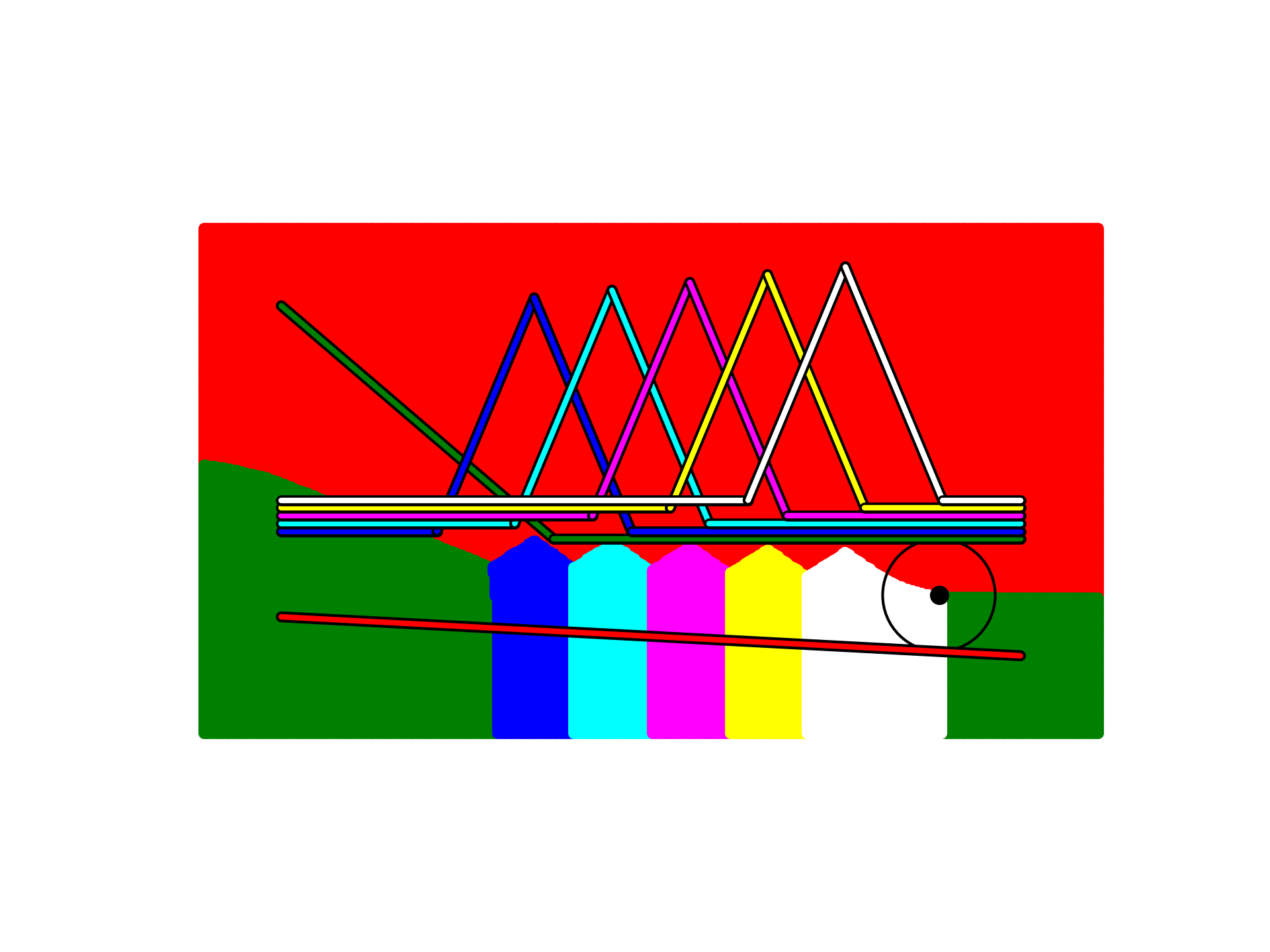}
    \caption{Farthest Trajectory Voronoi Diagram of the lower-bound construction presented in Lemma~\ref{lemma:traj_lp_monster}, note that the TMTB is on a curved edge of the diagram - in line with Theorem~\ref{thm:FTVD=MTB}}
    \label{fig:traj_FTVD}
\end{figure}

\begin{theorem}\label{thm:FTVD=MTB}
    Consider $n$ trajectories with at most $k$ segments in $\R^2$ with no 3 segments crossing at a single point, then the TMTB can be found in $\Oh{(\alpha(n) +k)n^2 k \log n}$ time.
\end{theorem}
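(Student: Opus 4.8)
The plan is to establish that the optimal TMTB center $c^*$ must lie either on an edge or at a vertex of the FTVD, and then to argue that once this is known, the optimum can be extracted by a search over the linear-sized (in the combinatorial complexity) list of Voronoi features, each contributing a constant-time subproblem. The construction time of the FTVD, $\Oh{(\alpha(n)+k)n^2k\log n}$, will dominate, matching the claimed bound.

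First I would argue the structural claim. For a candidate center $c$, the radius needed to touch all trajectories is $r(c) = \max_{j} \dist{c, T_j}$, so $r^*(\T) = \min_c \max_j \dist{c, T_j}$ and $c^*$ is a minimizer of the upper envelope of the functions $c \mapsto \dist{c,T_j}$. The key observation is that at $c^*$, the set $A$ of trajectories achieving the maximum (the "active" trajectories) must ``surround'' $c^*$ in the sense that $0$ lies in the convex hull of the outward gradients $\nabla \dist{c^*, T_j}$ for $j \in A$; otherwise one could move $c$ in a descent direction and decrease $r(c)$. Now, interior to a single FTVD cell $D_j$, trajectory $T_j$ is the unique farthest, so $|A|=1$ there and no such balancing is possible — hence $c^*$ cannot be in the interior of a cell. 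This forces $c^*$ onto the lower-dimensional skeleton: either an edge (where two colors are tied for farthest) or a vertex (three tied). This is exactly the content suggested by Figure~\ref{fig:traj_FTVD}, where the TMTB center sits on a curved edge.

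Next I would turn the structural claim into an algorithm. Once $c^*$ is known to be on the skeleton, I enumerate the $\Oh{n^2k^2}$ edges and vertices. For each vertex, the candidate center is determined and we evaluate $r$ there; since each vertex has a constant number of generators whose colors and segments are known, and evaluating $\dist{c, T_j}$ for all $j$ would naively cost $\Oh{nk}$, I instead use the fact that on a given edge or at a given vertex the farthest-trajectory value is realized by the (constantly many) generators, so the relevant radius is just the common distance to these generators — a constant-time computation per feature. For each edge, the restriction of $r$ to the edge is a one-variable function defined by two segment-distance functions that are equal along the edge; its minimum along the (bounded piece of the) edge is found in $\Oh{1}$ by calculus, and this local minimum is a further candidate. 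Taking the best candidate over all $\Oh{(nk)^2}$ features — which is $\Oh{(\alpha(n)+k)n^2k}$, absorbed by the construction cost — yields $c^*$ and $r^*$.

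The main obstacle I expect is making the ``$c^*$ lies on the skeleton'' argument fully rigorous in the presence of the non-smooth and non-convex distance-to-segment functions: $\dist{\cdot, T_j}$ is piecewise smooth (with breakpoints at the bisectors of consecutive sub-segments and their endpoints) and only locally Lipschitz, so the gradient-balancing / descent-direction argument must be phrased with Clarke subdifferentials or, more concretely, by a direct case analysis showing that any $c$ with a strict ``escape cone'' can be perturbed to reduce $\max_j \dist{c,T_j}$. One must also confirm that the optimum is not attained only ``at infinity'' or on an unbounded edge in a degenerate way — but since $\T$ is bounded, $r(c) \to \infty$ as $\|c\| \to \infty$, so a finite minimizer exists and the search over the bounded portions of edges together with all vertices suffices. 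The general-position hypothesis (no three segments crossing at a point) is what keeps the vertex set well-defined and of the stated size, so I would flag that the degenerate cases are handled by standard perturbation.
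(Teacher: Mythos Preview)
Your proposal is correct and follows essentially the same route as the paper: show the optimal center cannot lie in the interior of an FTVD cell, then enumerate the $\Oh{(nk)^2}$ vertices and edges and extract the best candidate in $\Oh{1}$ per feature using its generators, so the FTVD construction time dominates. The paper's version is slightly more elementary in two places---it dispenses with the subdifferential framing by simply moving $c$ toward the nearest point of the unique farthest trajectory to get a strict decrease, and it handles the edge minimization by an explicit three-case split (endpoint--endpoint, interior--interior, endpoint--interior) rather than a generic ``calculus'' appeal---but the structure and bound are identical.
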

\begin{proof}
We proceed by leveraging the structure of the FCLVD applied to the FTVD.  We first argue that the center $c$ of the optimal TMTB for $\T$ cannot be in (the interior of) a face $D_j$ of the FTVD.  If it where, then $T_j$ is strictly further from $c$ than any other trajectory $T_{j_1}$ and the radius of the ball is $r = \dist{c,T_j} > 0$.  Hence, we can move $c \to c'$ infinitesimally towards the closest point on $T_j$.  The new center $c'$ is still in the face, hence $\dist{c',T_j} < \dist{c,T_j}$ and $\dist{c', T_j}  = r' < r$ but $c$ still determines the radius of the ball while the radius of the ball is smaller than before, a contradiction.

Now we can reduce the TMTB problem to considering centers $c$ which live on a vertex or edge of the FTVD.  
Moreover, we know that if $c$ is the center of the MTB, and it lies on an edge or vertex, then the radius of the ball is the distance to the generators (segments $\ell_1$, $\ell_2$, and maybe $\ell_3$) of the vertex of edge.  This holds since we know the distance $\dist{c,\ell_j} = r$ for $j \in \{1,2,3\}$ is the distance to the furthest trajectory, which means that all other trajectories $T_{j'}$ are within distance $r$; and if we decrease $r$ then those trajectories associated with the generators are no longer touched by the ball. 

Hence, we can start by considering all, at most $\Oh{k^2n^2}$, vertices of the FTVD as potential centers of the TMTB.  We only need to check their $3$ generating line segments for the required radius, which takes $\Oh{1}$ time each.  The smallest radius is a valid touching ball, and an upper bound on the minimum touching ball.  

It remains to consider centers on the edges of the FTVD.  Again these can each be checked in $\Oh{1}$ time using its two generators, but requires more care as we provide next.  An edge of the FTVD (and by Bae's analysis~\cite{Bae_farthest_voronoi} the FCLVD) is formed in three structural ways: the interior of both segments, vertices of both segments, or an interior of one segment and a vertex of the other.  
In each case, if the center is on the FTVD edge, it will be at the location $c$ where the distance to the generators is minimal.  
When the edge is formed by two endpoints, the edge is their bisector, and the minimum is obtained at the midpoint between them.  
When the edge is formed from the interior of two segments, the distance is decreasing in one direction along the edge as the segments are getting closer; if the minimum is on this edge, it will be at its endpoint, which may be a vertex.  This holds unless the segments are parallel, in which case any point on the edge can serve to define the radius, as half the distance between parallel segments.  
When the edge is formed by one endpoint and one segment interior, the edge is curved, but the distance is minimized on the edge at the midpoint between the endpoint and its projection on the segment interior.  
In each of these cases, the distance to the generators decreases monotonically as we move along the edge to the minimum. 
Note that the minimum touching ball of two generators may not have its center on the edge associated with those generators (e.g., it may be a vertex, or another pair); this implies the other generators have a smaller radius, and are also valid touching ball, dominating this one.  Hence we do not need to explicitly exclude such cases.

In conclusion, the center of the TMTB must occur on an edge or vertex of the FTVD.  If the center occurs on one of those objects, we can determine its location and the associated radius in $\Oh{1}$ time, given the generators associated with the FTVD object.  There are at most $\Oh{k^2n^2}$ FTVD objects (vertices and edges) to check, and the FTVD can be constructed in $\Oh{(\alpha(n)+k)kn^2\log n}$ which bounds the runtime. 
\end{proof}

\section{Approximate Trajectory Minimum Touching Ball}
\label{sec:approx}

Given a point $ c $, we can construct a TMTB candidate for $ \mathcal{T} $ by computing its radius as 
$
r = \max_{T \in \mathcal{T}} \operatorname{dist}(c, T),
$
where, recall, distance is defined as 
$
\dist{c, T} = \min_{\ell \in T} \min_{x \in \ell} \|c - x\|.
$
This can be computed in $O(k)$ time using straightforward iteration over segments, or in $ \Oh{k \log k} $ preprocessing and $ \Oh{\log k} $ query time using a Closest Segment Voronoi Diagram (CSVD), due to~\cite{CLVD}. In this section, we present results that depend on the geometry of the trajectories. Let $r^*$ denote the optimal radius of a TMTB of the trajectories $\T$, and let 
$
\Delta_{\T} = \max_{\substack{x \in T \in \T \\ x' \in T' \in \T}} \|x-x'\|
$
be the diameter of $\T$.  

The simplest way to approximate a TMTB is to define a uniform grid over the bounding box of the trajectories and evaluate the maximum distance from each grid point to all trajectories. A grid of width $\eps$ contains $\Oh{(\Delta_{\T} / \eps)^2}$ points. Evaluating each point takes $O(n \log k)$ time, where $n$ is the number of trajectories and each has $k$ segments. This yields a total runtime of $\Oh{(\Delta_{\T} / \eps)^2 \cdot n \log k }$ for a TMTB approximation with additive error $\eps$.

We present an algorithm for a $(\eps,\rho)$-approximate TMTB in two stages: we first obtain a constant approximation via a reduction to a constant-factor approximation, then refine this to get a $(\eps,\rho)$-approximate TMTB. Our key idea is to assume that the center lies somewhere along a trajectory $T$, and solve the decision problem if the radius is less than a threshold $\tau$.  

\subsection{Constant Factor Approximate TMTB}
Given a radius estimate $\tau$, we define, for each segment $\ell \subseteq T$ and each trajectory $T \in \mathcal{T}$, the set
\[
I_T^{\tau} = \{x \in \ell \mid \operatorname{dist}(x, T) \le \tau \},
\]
i.e., the portion of $ \ell $ that is within distance $ \tau $ from $ T $. Taking the intersection over all trajectories gives the segment-wise feasible region:
\[
F_T^{\tau} := \bigcap_{T \in \mathcal{T}} I_T^{\tau}.
\]
If this intersection is non-empty for some segment $\ell \subseteq T $, then any point in it defines a center with radius at most $ \tau $ that touches every trajectory in $\T$. 

To refine the estimate, we perform a geometrically decreasing  search: we shrink $\tau$ by a factor $\gamma$, and recompute $F_t^{\tau}$ until the intersection becomes empty, or we reach a precision threshold $\rho$. If the intersection vanishes, we move to the next segment $\ell$ of $T$.  Initializing $\tau = 2 \hat \Delta$ as a $2$-approximation of $\Delta_\T$ ensures we start with a valid upper bound, and is sketched in Algorithm \ref{algo:binary_search_radius}: \textsc{EstimateRad}.

\begin{algorithm}
\caption{\textsc{EstimateRad}($\T, T, \gamma, \rho, \tau$).}
\label{algo:binary_search_radius}
\begin{algorithmic}[1]
    \FOR{each segment $\ell$ of $T$}
        \WHILE {($\tau > \rho$)}{
            \FOR{each trajectory $T' \in \T \setminus \{T\} $}
                \STATE Compute intervals $I_{T'}^\tau\subseteq\ell $ 
            \ENDFOR
          \STATE Compute the intersection $F_\ell^\tau = \bigcap\limits_{T'\in\T\setminus \{T\}} I_{T'}^\tau$ 
          \STATE \textbf{if} $F_\ell^\tau = \emptyset$ \textbf{break} \emph{[go to next $\ell \in T$]}
          \STATE \textbf{else} $\tau \leftarrow \tau/\gamma$
        }\ENDWHILE
    \ENDFOR
    \RETURN{$\gamma \cdot \tau$}
\end{algorithmic}
\end{algorithm}

\begin{lemma}\label{lemma:4-approx}
    \textsc{EstimateRad}($\T,T,\gamma,\rho, 2\hat \Delta)$ (Alg~\ref{algo:binary_search_radius}) computes a $(\gamma(1+\beta)-1,\rho)$-approximate TMTB with center on $T$, where $\beta$ is such that $\dist{c^*(\T),T} \leq \beta r^*(\T)$, 
    in time 
    \[
O(nk \log(nk) (\log_\gamma(\Delta_{\T} / \max\{r^*,\rho\}) + k)).
    \]
\end{lemma}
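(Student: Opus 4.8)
The plan is to treat the approximation guarantee and the running time separately, using the fact that the call initializes $\tau = 2\hat\Delta \ge \Delta_{\T} \ge r^*$ (writing $r^* = r^*(\T)$, $c^* = c^*(\T)$). The backbone of correctness is a validity invariant: whenever the algorithm finds $F_\ell^\tau \neq \emptyset$ on line~6 for a segment $\ell$ of $T$, any witness $x \in F_\ell^\tau$ has $\dist{x,T'} \le \tau$ for all $T' \in \T \setminus \{T\}$ and $\dist{x,T} = 0$ since $x \in \ell \subseteq T$, so $B_\tau(x)$ is a touching ball for $\T$ centered on $T$. Because $\tau$ only ever changes via the division on line~8, the returned value $\gamma\tau$ equals exactly the \emph{last} radius at which a feasibility test succeeded, so the returned radius always comes with an explicit center (which the algorithm records), as long as at least one test succeeds; this holds because for $\tau_0 = 2\hat\Delta \ge \Delta_{\T}$ every waypoint of $T$ lies in $F_\ell^{\tau_0}$ for every segment $\ell$. (If $\tau_0 \le \rho$ the while-loop never runs and the returned $\gamma\tau_0$ is trivially a valid radius $\le \gamma\rho$ for any center on $T$.)

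To bound the returned radius, let $y$ be the point of $T$ nearest $c^*$, so $\|y - c^*\| = \dist{c^*,T} \le \beta r^*$, and let $\ell_y$ be the segment of $T$ containing $y$. The triangle inequality gives $\dist{y,T'} \le \|y-c^*\| + \dist{c^*,T'} \le (1+\beta)r^*$ for every $T'$, hence $F_{\ell_y}^{t} \neq \emptyset$ for all $t \ge (1+\beta)r^*$. Now split into cases. If $\tau$ ever drops to $\le \rho$, that drop divided some $\tau' \in (\rho, \gamma\rho]$ for which a test succeeded, and $\tau$ is unchanged thereafter, so the returned value $\gamma\tau = \tau' \le \gamma\rho$. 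Otherwise the while-condition $\tau > \rho$ always holds, and when the for-loop reaches $\ell_y$ it keeps dividing $\tau$ while $\tau \ge (1+\beta)r^*$ (feasibility is guaranteed there), so it can only break after $\tau < (1+\beta)r^*$; as $\tau$ is nonincreasing afterward, the final value $\tau_{\mathrm f}$ satisfies $\gamma\tau_{\mathrm f} < \gamma(1+\beta)r^*$. In either case the returned radius is at most $\gamma(1+\beta)\max\{r^*,\rho\} = \big(1 + (\gamma(1+\beta)-1)\big)\max\{r^*,\rho\}$, which with the validity invariant certifies a $(\gamma(1+\beta)-1,\rho)$-approximate TMTB with center on $T$.

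For the running time, first bound one while-iteration. For $T' \in \T \setminus \{T\}$ and a segment $\ell' \in T'$, the set $\{x \in \ell \mid \dist{x,\ell'} \le \tau\}$ is the intersection of $\ell$ with the convex Minkowski sum of $\ell'$ with a radius-$\tau$ disk, hence a single subinterval of $\ell$ found in $\Oh{1}$ time; so $I_{T'}^\tau$ is a union of at most $k$ intervals and the $n-1$ sets together have $\Oh{nk}$ interval endpoints. Sorting these and sweeping yields $F_\ell^\tau = \bigcap_{T'} I_{T'}^\tau$ and tests emptiness in $\Oh{nk \log(nk)}$ time. It remains to count while-iterations over all $k$ segments: each iteration either divides $\tau$ by $\gamma$ or is the breaking iteration of its segment, and the latter occurs at most $k$ times. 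For the dividing iterations, $\tau$ runs from $2\hat\Delta = \OTheta{\Delta_{\T}}$ down to $\tau_{\mathrm f} > \max\{r^*,\rho\}/\gamma$ — the last successful test forces $\gamma\tau_{\mathrm f} \ge r^*$ (it exhibits a touching ball of that radius) and the while-condition forces $\gamma\tau_{\mathrm f} > \rho$ — so there are $\Oh{\log_\gamma(\Delta_{\T}/\max\{r^*,\rho\})}$ of them. Multiplying, and absorbing the $\Oh{nk}$ cost of the $2$-approximation $\hat\Delta$ of the diameter (farthest waypoint from an arbitrary one), the total is $\Oh{nk \log(nk)\,(\log_\gamma(\Delta_{\T}/\max\{r^*,\rho\}) + k)}$.

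The main obstacle is the bookkeeping around the two loop-exit conditions: since $\tau$ persists across segments and each segment can either shrink $\tau$ or just break, one must argue carefully that the returned $\gamma\tau$ coincides with the globally last successful feasibility test (which supplies both the certified radius and its center), and must pin $\tau_{\mathrm f}$ down on both sides — the upper bound $\gamma\tau_{\mathrm f} < \gamma(1+\beta)r^*$ for the approximation, via feasibility of the segment $\ell_y$ through the projection of $c^*$, and the lower bound $\tau_{\mathrm f} > \max\{r^*,\rho\}/\gamma$ for the iteration count.
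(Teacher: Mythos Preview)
Your proof is correct and follows essentially the same approach as the paper: both bound one while-iteration by $\Oh{nk\log(nk)}$ via interval construction and sorting, count total iterations as at most $k$ breaks plus $\Oh{\log_\gamma(\Delta_{\T}/\max\{r^*,\rho\})}$ shrinks, and obtain the approximation factor by projecting $c^*$ onto $T$ and using the triangle inequality to get the $(1+\beta)r^*$ bound. Your write-up is somewhat more explicit about the bookkeeping (the validity invariant, the case split on whether $\tau$ ever reaches $\rho$, and why $\gamma\tau_{\mathrm f}$ coincides with the last successful test), whereas the paper packages the same argument through the intermediate quantity $r^*_T$, the optimal radius with center constrained to $T$; but the substance is the same.
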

\begin{proof}
The runtime analysis is straightforward. 
We can compute the $2$-approximation $\hat \Delta$ in $O(nk)$ time by choosing any point $x \in T \in \T$ and for each segment $\ell \in T' \in \T$ 
finding the furthest point from $x$; we set $\hat \Delta$ as the max distance among these options.  

Notice that $I_{T}^\tau$ (line 4) consists of at most $k$ intervals on the segment $\ell$ that each can be constructed in constant time, resulting in $\Oh{nk}$ total intervals.  Their intersection $F_\ell^\tau$ (line 5) can be computed by sorting along $\ell$ in $\Oh{nk\log(nk)}$ time.   
As the algorithm iterates over segments $\ell \in T$, it maintains the smallest threshold $\tau$ for which it found an intersection.  If this shrinks $\tau$ (line 7) $m$ times, then the feasible region $F_{\ell}^\tau$ (line 5) is computed $k + m$ times, since it hits the break statement (line 6) at most $k$ times.
Starting at $\tau = 2 \hat \Delta \geq \Delta_{\T}$, the shrinking terminates when $\tau$ goes below $\rho$ or goes below the optimal radius $r^*_T$ for a touching ball of $\T$ with center on $T$; note $r^*_T \geq r^* = r^*(\T)$. 
Combining these analysis together achieves the claimed runtime.  

To argue for correctness, first note that $r^*  \leq\Delta_{\T} \leq 2\hat \Delta$, since a point at one of the diameter elements is within a radius $\Delta_{\T}$ of all points in $\T$ and hence is a touching ball.  
Then we restrict to centers along a given trajectory $T$, and return $\tau$ such that the optimal radius $r^*_T$ for a touching ball of $\T$ with center on $T$ satisfies $\max \{\tau / \gamma,\rho\} \leq r^*_T \leq \tau$, making $\tau$ a relative $\gamma$-approximation up to distance $\rho$ for $r^*_T$. 

We also know that $r^*_T \geq r^*$ since it is valid, but has more restrictions.  By assumption the optimal center $c^* = c^*(\T)$
satisfies $\dist{c^*,T} \leq \beta r^*$.  Let $c_T$ be the closest point to $c^*$ on $T$, and we know that the radius $r^*_T$ of the TMTB at $c_T$ satisfies $r_T \geq r^*_T$.  Since moving $c^*$ to $r_T$ by $\beta r^*$ can increase the radius by at most $\beta r^*$, then $r^*_T \leq r_T \leq r^*(1+\beta)$.  Putting it all together
\[
\tau/(\gamma(1+\beta)) \leq r^*_T/(1+\beta) \leq r^* \leq r^*_T \leq \tau
\]
Thus $\tau$ is a relative $\gamma(1+\beta)$-approximation of $r^*$, assuming the radius is at least $\rho$; so a $(\gamma(1+\beta)-1,\rho)$-approximation, as claimed.  
\end{proof}

\begin{cor}
    \label{cor:4-apx}
    \textsc{EstimateRad}($\T,T,2,\rho, 2\hat \Delta)$ (Alg~\ref{algo:binary_search_radius}) where $T \in \T$, computes a $(3,\rho)$-approximate TMTB with center on $T$, and 
    in time 
    $
O(nk \log(nk) (\log(\Delta_{\T} / \max\{r^*,\rho\}) + k)).  
    $
\end{cor}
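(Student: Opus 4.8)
The plan is to invoke Lemma~\ref{lemma:4-approx} directly with the parameter choice $\gamma = 2$, and to pin down the admissible value of $\beta$. Recall that in Lemma~\ref{lemma:4-approx}, $\beta$ is any constant with $\dist{c^*(\T),T} \leq \beta\, r^*(\T)$. Here we additionally assume $T \in \T$. Then the optimal touching ball $B_{r^*}(c^*)$ must in particular touch $T$ itself, so there is a point of $T$ within distance $r^*(\T)$ of $c^*(\T)$; that is, $\dist{c^*(\T),T} \leq r^*(\T)$, so $\beta = 1$ is admissible.

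First I would substitute $\gamma = 2$ and $\beta = 1$ into the approximation guarantee of Lemma~\ref{lemma:4-approx}: this yields a $(\gamma(1+\beta)-1,\rho) = (2\cdot 2 - 1, \rho) = (3,\rho)$-approximate TMTB with center on $T$, as claimed. Then for the runtime I would note that the only place $\gamma$ enters the bound of Lemma~\ref{lemma:4-approx} is the factor $\log_\gamma(\Delta_{\T}/\max\{r^*,\rho\})$; with $\gamma = 2$ this is $\Theta(\log(\Delta_{\T}/\max\{r^*,\rho\}))$, and the call is otherwise identical (same initialization $\tau = 2\hat\Delta$), giving the stated time $O(nk\log(nk)(\log(\Delta_{\T}/\max\{r^*,\rho\}) + k))$.

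There is essentially no obstacle here: the corollary is just the specialization of Lemma~\ref{lemma:4-approx} obtained by fixing $\gamma = 2$ and observing that membership $T \in \T$ forces $\beta = 1$. The only point requiring a moment of care is verifying that $\beta = 1$ is admissible, which is immediate from the definition of a touching ball, since any ball touching every element of $\T$ touches $T$ in particular.
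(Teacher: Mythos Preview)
Your proposal is correct and matches the paper's own proof essentially verbatim: the paper's argument is simply to note that $T \in \T$ gives $\dist{c^*,T} \leq r^*$, so $\beta = 1$, and then plug $\gamma = 2$ into Lemma~\ref{lemma:4-approx} to obtain $\gamma(1+\beta)-1 = 3$. Your version just spells out the same specialization with a bit more detail on why $\beta = 1$ holds and how the runtime simplifies.
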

\begin{proof}
To get the relative error analysis, we use $\dist{c^*,T} \leq r^*$ for $T \in \T$, and thus $\beta = 1$.  With $\gamma=2$, then $\gamma(1+\beta)-1 = 3$ as desired.  
\end{proof}
\subsection{$(1+\varepsilon)$-Approximate TMTB}
\label{sec:1+eps-apx}
Given a constant factor estimate $\tau$ of the radius of the TMTB, we construct the $\tau$-\emph{sausage} of trajectory $T$, defined as the set
\[
S_{T,\tau} = \{x \in \mathbb{R}^2 \mid \dist{x, T} \le \tau \}.
\]
The region is the Minkowski sum of $T$ with a ball of radius $\tau$; it is all points within $\tau$ from $T$.  

\begin{lemma} 
\label{lem:sausage}
Let $B_{r^*}(c^*)$ be the TMTB for $\T$ and $T \in \T$.  Then $c^* \in S_{T,r^*}$.  
\end{lemma}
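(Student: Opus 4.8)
The plan is to unwind the definitions directly; no geometric construction or optimality argument is needed. By hypothesis $B_{r^*}(c^*)$ is a touching ball for $\T$, so it touches every object in $\T$, and in particular it touches the fixed trajectory $T$. By the definition of \emph{touching} given in the preliminaries, this means $T \cap B_{r^*}(c^*) \neq \emptyset$, so I can fix a witness point $x \in T \cap B_{r^*}(c^*)$.

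From $x \in B_{r^*}(c^*)$ I get $\|x - c^*\| \leq r^*$, and since $x \in T$ I get
\[
\dist{c^*, T} = \min_{\ell \in T} \min_{y \in \ell} \|c^* - y\| \leq \|c^* - x\| \leq r^*.
\]
By the definition of the $r^*$-sausage, $S_{T, r^*} = \{z \in \R^2 \mid \dist{z, T} \le r^*\}$, this inequality is precisely the statement $c^* \in S_{T, r^*}$, which completes the argument. There is no substantive obstacle here: the only point requiring a moment's care is that the witness $x$ genuinely lies \emph{on} $T$ rather than merely near it, which is guaranteed because ``touches'' is defined via a nonempty set intersection rather than via a distance threshold; once that is noted, the displayed chain of inequalities is immediate. (The same reasoning applies verbatim to every $T \in \T$, so in fact $c^*$ lies in the common intersection $\bigcap_{T \in \T} S_{T, r^*}$, which is the property the subsequent approximation argument will exploit.)
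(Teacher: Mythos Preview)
Your proof is correct and follows essentially the same approach as the paper: both arguments simply unwind the definitions to conclude $\dist{c^*,T}\leq r^*$ and hence $c^*\in S_{T,r^*}$. You spell out the witness point $x\in T\cap B_{r^*}(c^*)$ explicitly, whereas the paper compresses this into a single sentence, but the logic is identical.
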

\begin{proof}
    By definition, the TMTB $B_{r^*}(c^*)$ of $\T$ must satisfy $\dist{c^*,T}\leq r^*$.  Since $S_{T,r^*}$ contains all points $x$ satisfying $\dist{x,T}\leq r^*$, that includes $c^*$.  
\end{proof}

Now we use that we have a value $\tau$ (via Lemma \ref{lemma:4-approx}) that is the radius of a $(3,\rho)$-approximation of the TMTB, i.e. $r^* < \max\{ 4 \tau, \rho\}$ and $\tau/4\leq r^*$.  Note that $\tau > r^*$ and $S_{T,r^*} \subset S_{T,\tau}$, so by Lemma \ref{lem:sausage} we have that $c^* \in S_{T,\tau}$.  

To achieve a $(\eps,\rho)$-approximation of the TMTB, we can invoke Algorithm \ref{algo:binary_search_radius} with $\gamma = 1+\eps/3$ and on some $T$ so that $\beta = \eps/3$.  Then, if $\eps < 1/2$, then $\gamma(1+\beta)-1 = (1+\eps/3)(1+\eps/3)-1 \leq \eps$.  

We could for instance run this on a point set $P$ (each $p \in P$ a length-0 trajectory), then we would require sufficiently dense points $P \subset S_{T,\tau}$, so that any point $x \in S_{T, \tau}$ there exists a point $p \in P$ with $\|x-p\| \leq \max\{\rho, \eps \tau/12\} \leq \max\{\rho, \eps r^*/3\}$.  

However, this would require many points in $P$ as we do not control for the length of the trajectory segments relative to $\tau$ and $r^*$.  

Instead we introduce the \emph{$(\eps, \tau)$-ghost trajectories} $\G_{T,\tau,\eps}$ as a collection of trajectories that cover the $\tau$-sausage of $T$, evenly spaced at distance $\tau\eps/12$, and each running parallel to $T$.
We construct the $i$th ghost trajectory as follows. For every segment $\ell$ of $T$, create a parallel segment offset by a distance of $i\tau\eps/12$. At each vertex where two segments meet, we extend/reduce the offset segments until they intersect, and the last ones we extend by $\tau$ so they fill the caps of the sausage. The resulting ghost segment intersection points lie on the angular bisector of the separating vertex.

We include one such ghost trajectory for each admissible offset $|i\tau\eps/12| \leq \tau$, so with integers $i \in [-12/\eps, 12/\eps]$, on both sides of $T$;  see Figure~\ref{fig:sausage_grid}.
This results in a total of $\Theta(1/\eps)$ ghost trajectories, each with (at most) the same number of segments as $T$; some segments will disappear around consecutive same-direction bends. Together they form an $(\tau\eps/12)-$net of the sausage in the sense that each point $x \in S_{T,\tau}$ satisfies $\min_{G \in \G_{T,\tau,\eps}} \dist{x,G} \leq \tau \eps / 12 \leq r^* \eps/3$.  

\begin{figure}
    \centering
    % [trim={left bottom right top},clip]
    \includegraphics[trim={0.2cm 2.3cm 0.2cm 1.7cm},clip,width=\linewidth]{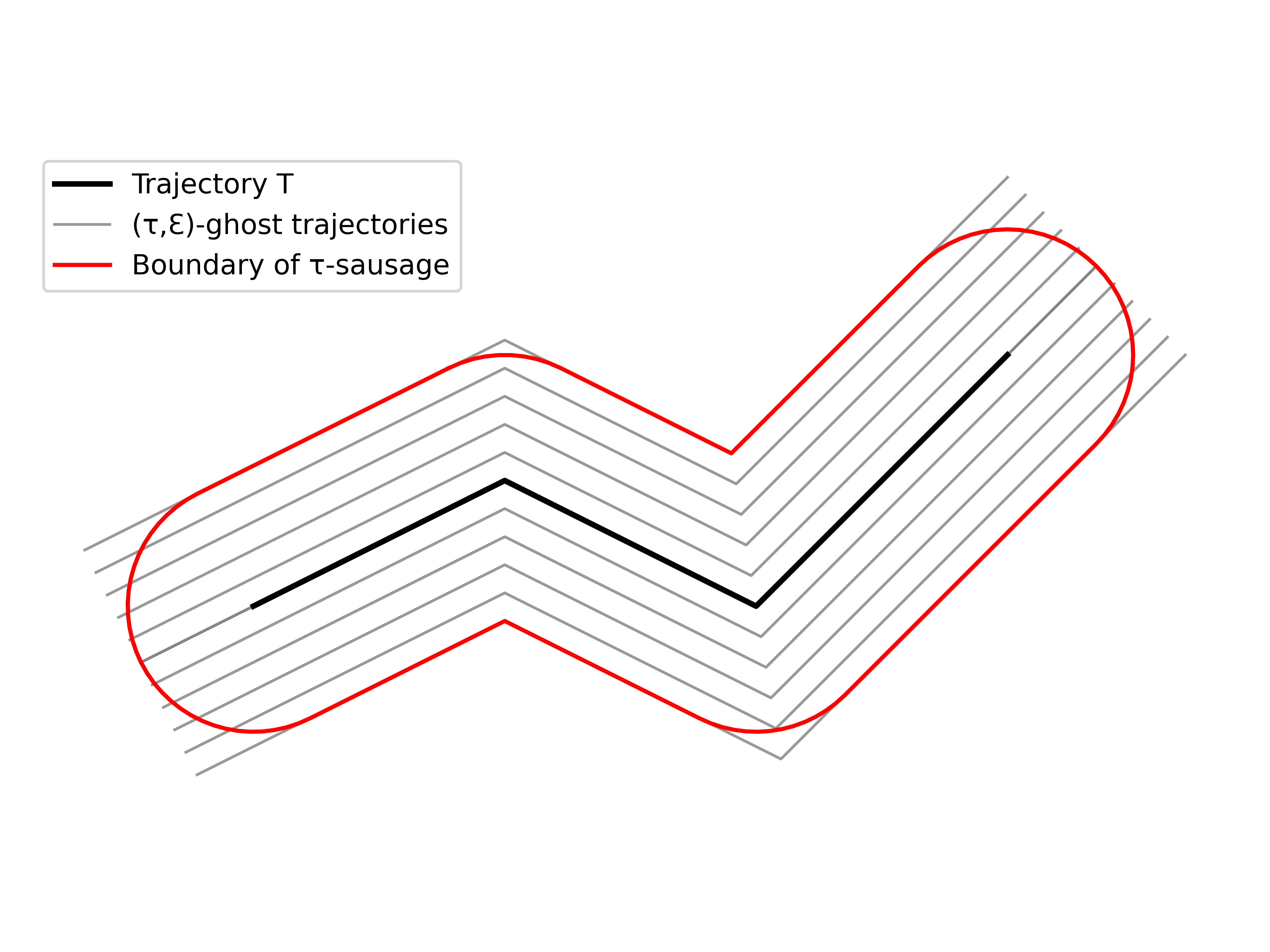}
    \caption{A trajectory sausage $S_{T,\tau}$ covered by ghost trajectories $\G_{T,\tau,\eps}$.}
    \label{fig:sausage_grid}
\end{figure}

The process of checking each ghost trajectory with \textsc{EstimateRad} is outlined in Algorithm \ref{algo:ghost-search}.  
Note that the trajectory $T$ that \textsc{EstimateRad} takes as input is simply used as a set of segments, and so we can use all segments in $\G_{T,\tau,\eps} = \cup_i G_i$ in that role.

\begin{algorithm}
\caption{\textsc{EstimateTMBT}($\T, \eps, \rho$).}
\label{algo:ghost-search}
\begin{algorithmic}[1]
    \STATE Choose any $T \in \T$ and run \textsc{Estimate-Rad}($\T,T,2,\rho,2\hat \Delta)$ to get an estimate $\tau$ of $r^*$.  
    \STATE Compute $O(1/\eps)$ ghost trajectories $\G_{T,\tau,\eps}$ to $(\eps \tau/12)$-cover the sausage $S_{T,\tau}$.  
    \RETURN \textsc{EstimateRad}($\T,  \G_{T,\tau,\eps},\eps/3,\rho,\tau$) 
\end{algorithmic}
\end{algorithm}

\begin{theorem}\label{thm:algo2}
    \textsc{EstimateTMBT}($\T, \eps, \rho)$ (Alg \ref{algo:ghost-search}) for $\eps \leq 1/2$ computes a $(\eps,\rho)$-approximate TMTB of $\T$ in time
\[
\Oh{\algoTwoRuntime}.
\]         
\end{theorem}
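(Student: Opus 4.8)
The plan is to establish the theorem by combining the correctness and runtime guarantees of the two sub-procedures invoked by \textsc{EstimateTMBT}. First I would handle correctness. Line~1 runs \textsc{EstimateRad}($\T,T,2,\rho,2\hat\Delta$), so by Corollary~\ref{cor:4-apx} we obtain a value $\tau$ that is the radius of a $(3,\rho)$-approximate TMTB with center on $T$; in particular $\tau/4 \le r^* < \max\{4\tau,\rho\}$, and when $r^* \ge \rho$ we have $r^* \le \tau$ (since $\tau$ upper-bounds the optimal radius for centers restricted to $T$, which is at least $r^*$). By Lemma~\ref{lem:sausage} the true optimal center $c^*$ lies in $S_{T,r^*} \subseteq S_{T,\tau}$. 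By the construction of the ghost trajectories $\G_{T,\tau,\eps}$, every point of $S_{T,\tau}$ is within distance $\tau\eps/12 \le r^*\eps/3$ of some $G \in \G_{T,\tau,\eps}$; hence $\dist{c^*,\G_{T,\tau,\eps}} \le (\eps/3) r^*$, so the hypothesis of Lemma~\ref{lemma:4-approx} holds with the segment-set $T$ replaced by $\G_{T,\tau,\eps}$ and with $\beta = \eps/3$.

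Next I would invoke Lemma~\ref{lemma:4-approx} with $\gamma = 1+\eps/3$, $\beta = \eps/3$, and initial threshold $\tau$ (which is a valid upper bound since $\tau \ge r^* \ge r^*_{\G}$ whenever $r^* \ge \rho$; if $r^* < \rho$ the procedure simply returns a radius-$\rho$ touching ball, which is exactly what an $(\eps,\rho)$-approximation permits). The lemma yields a $(\gamma(1+\beta)-1,\rho)$-approximate TMTB, and the excerpt already records the elementary computation that for $\eps \le 1/2$, $\gamma(1+\beta)-1 = (1+\eps/3)^2 - 1 \le \eps$. Note that \textsc{EstimateRad} treats its second argument purely as a set of segments, so feeding it $\G_{T,\tau,\eps} = \cup_i G_i$ is legitimate; the returned center lies on one of the ghost segments. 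This establishes that the output is a valid $(\eps,\rho)$-approximate TMTB.

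For the runtime I would account for the three lines separately. Line~1 costs $\Oh{nk\log(nk)(\log(\Delta_\T/\max\{r^*,\rho\})+k)}$ by Corollary~\ref{cor:4-apx}. Line~2 constructs $\Theta(1/\eps)$ ghost trajectories, each with at most $k$ segments, in $\Oh{k/\eps}$ time. The dominant cost is line~3: here \textsc{EstimateRad} is run on the segment set $\G_{T,\tau,\eps}$, which has $N' = \Theta(k/\eps)$ segments rather than $k$. Plugging $N'$ into the runtime bound of Lemma~\ref{lemma:4-approx} (with the roles of ``$k$'' in that bound played by $N'$, the number of segments in the second argument, while the number of trajectories in $\T$ and their segment count $k$ are unchanged), and using that the geometric search now starts at $\tau = \OTheta{r^*}$ rather than $\Delta_\T$ so the number of shrink steps is $\Oh{\log_\gamma(\tau/\max\{r^*,\rho\})} = \Oh{\frac{1}{\eps}\log(\max\{r^*,\rho\}\cdot\text{const}/\max\{r^*,\rho\})} = \Oh{1/\eps}$, gives $\Oh{nk\log(nk)\frac{1}{\eps}(\log(\Delta_\T/\max\{r^*,\rho\})+k)}$ after consolidating the factors; this term dominates lines~1 and~2, yielding the stated bound $\Oh{\algoTwoRuntime}$.

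The step I expect to require the most care is the bookkeeping in the runtime of line~3: one must be careful that the ``$\log(nk)$'' sorting factor in Lemma~\ref{lemma:4-approx} refers to the total number of intervals, which is now $\Oh{n \cdot k/\eps}$ per segment of the ghost set, and that there are $\Theta(k/\eps)$ such segments, while simultaneously the number of geometric-search iterations drops from $\log_\gamma(\Delta_\T/\cdot)$ to $\Oh{1/\eps}$ because the starting threshold $\tau$ is already a constant-factor estimate of $r^*$. Reconciling these competing changes and confirming that the product collapses to exactly $\algoTwoRuntime$ — in particular that the $\log(\Delta_\T/\max\{r^*,\rho\})$ factor survives only through the additive contribution of line~1 and the $+k$ term inside the parenthesis, not multiplied by $1/\eps^2$ — is the delicate accounting that the proof must get right.
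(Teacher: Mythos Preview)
Your proposal is correct and follows essentially the same approach as the paper: correctness via Corollary~\ref{cor:4-apx}, Lemma~\ref{lem:sausage}, and the ghost-trajectory covering to get $\beta=\eps/3$, then Lemma~\ref{lemma:4-approx} with $\gamma=1+\eps/3$; runtime by bounding each line of Algorithm~\ref{algo:ghost-search} separately and using that $\tau$ is maintained across all $O(k/\eps)$ ghost segments so the total number of feasible-set computations stays additive. Your bookkeeping is in fact slightly tighter than the paper's---you observe that line~3 starts at $\tau=\Theta(\max\{r^*,\rho\})$ rather than $\Delta_{\T}$, so its shrink count is only $O(1/\eps)$, whereas the paper simply over-counts this as $O((1/\eps)\log(\Delta_{\T}/\max\{r^*,\rho\}))$---and your worry about a hidden $1/\eps^2$ is unfounded precisely because the threshold is shared across segments; both analyses land on the stated bound.
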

\begin{proof}
    By Corollary \ref{cor:4-apx} step 1 is within the time bound, and step 2 takes $O(k/\eps)$, which is also within the bound.  
    Step 3 dominates the cost as it needs to run Algorithm \ref{algo:binary_search_radius} on a set of $\Oh{k/\eps}$ segments, with $\gamma = 1+\eps/3$.  The $\gamma =1+\eps/3$ affects the runtime as $O(\log_{1+\eps/3}(X)) = O(\frac{1}{\eps} \log(X))$.  We maintain the upper bound $\tau$ as we iterate through the $O(k/\eps)$ ghost trajectories, and so the feasible sets are built at most $O(\frac{1}{\eps}\log(\Delta_{\T}/\max\{r^*,\rho\}) + k/\eps)$ times.  
    The claimed runtime follows.

    The accuracy follows by invoking Lemma \ref{lemma:4-approx} using $\gamma=1+\eps/3$ and $\beta = \eps/3$, so $\gamma(1+\beta)-1 \leq 3$ for $\eps < 1/2$.  By the construction of the ghost trajectories, each point $x \in S_{T,\tau}$ satisfies $\dist{x,G} \leq \eps \tau/12 \leq \eps r^*/3$ for some $G \in \G_{T,\eps,\tau}$, especially for the true TMTB center $c^*\in S_{T,\tau}$ due to Lemma~\ref{lem:sausage}.  
    \end{proof}

\section{High-dimensional TMTB}
Our LP-type algorithms for $k=0$ and $k=1$ trajectories, that is, points and line segments, extend to $\R^d$ without modification. For general trajectories, our approximation algorithm extends to $\R^d$ (for constant dimension $d$), by adding $O(1/\eps^{d-1})$ ghost trajectories to cover the sausage $S_{T,\tau}$.  The feasible set is built between pairs of low-dimensional objects and the complexity does not change.  The runtime becomes $O(n\frac{k\log(nk)}{\eps}(\log(\Delta_{\T}/\max \{r^*,\rho\}) + k/\eps^{d-2}))$.

%\clearpage

\small
%\bibliographystyle{abbrv}
%\bibliography{references}

\begin{thebibliography}{10}

\bibitem{Amenta1994}
N.~Amenta.
\newblock Helly-type theorems and generalized linear programming.
\newblock {\em Discrete \& Computational Geometry}, 12(3):241–261, Sept.
  1994.

\bibitem{Bae_farthest_voronoi}
S.~W. Bae.
\newblock Tight bound for farthest-color voronoi diagrams of line segments.
\newblock In {\em WALCOM}, pages 40--51, 2012.

\bibitem{Clarkson}
K.~L. Clarkson.
\newblock Las vegas algorithms for linear and integer programming when the
  dimension is small.
\newblock {\em J. ACM}, 42(2):488–499, 1995.

\bibitem{Kalai}
G.~Kalai.
\newblock A subexponential randomized simplex algorithm (extended abstract).
\newblock In {\em Proceedings of the Twenty-Fourth Annual ACM Symposium on
  Theory of Computing}, STOC '92, page 475–482, New York, NY, USA, 1992.
  Association for Computing Machinery.

\bibitem{CLVD}
D.~G. Kirkpatrick.
\newblock Efficient computation of continuous skeletons.
\newblock In {\em 20th Annual Symposium on Foundations of Computer Science
  (sfcs 1979)}, pages 18--27, 1979.

\bibitem{kranstauber2011movebank}
B.~Kranstauber, A.~Cameron, R.~Weinzerl, T.~Fountain, S.~Tilak, M.~Wikelski,
  and R.~Kays.
\newblock The movebank data model for animal tracking.
\newblock {\em Environmental Modelling \& Software}, 26(6):834--835, 2011.

\bibitem{matheny2020scalable}
M.~Matheny, D.~Xie, and J.~M. Phillips.
\newblock Scalable spatial scan statistics for trajectories.
\newblock {\em ACM Transactions on Knowledge Discovery from Data (TKDD)},
  14(6):1--24, 2020.

\bibitem{MatousekSharirWelzl}
J.~Matousek, M.~Sharir, and E.~Welzl.
\newblock A subexponential bound for linear programming.
\newblock {\em Algorithmica}, 16(4/5):498--516, 1996.

\bibitem{seidel}
R.~Seidel.
\newblock Small-dimensional linear programming and convex hulls made easy.
\newblock {\em Discret. Comput. Geom.}, 6:423--434, 1991.

\bibitem{lptype}
M.~Sharir and E.~Welzl.
\newblock A combinatorial bound for linear programming and related problems.
\newblock In {\em {STACS} 92}, volume 577, pages 569--579, 1992.

\bibitem{sila2016analysis}
K.~Si{\l}a-Nowicka, J.~Vandrol, T.~Oshan, J.~A. Long, U.~Dem{\v{s}}ar, and
  A.~S. Fotheringham.
\newblock Analysis of human mobility patterns from gps trajectories and
  contextual information.
\newblock {\em International Journal of Geographical Information Science},
  30(5):881--906, 2016.

\end{thebibliography}

\newpage
\appendix
\section{Figures for Lemma~\ref{lemma:traj_lp_monster} and Theorem~\ref{thm:FTVD=MTB}}\label{apx:lemma2figs}
In this section, we present visual aids to the proof of Lemma~\ref{lemma:traj_lp_monster} and Theorem~\ref{thm:FTVD=MTB} by showing how the construction needs every Trajectory in an LP-type basis and that the TMTB lies on an FTVD vertex.

First, by removing $T_0$, the remaining trajectories intersect, making the problem trivial. Second, by removing $T_1$, the TMTB moves to the left, where the slant of $T_0$ ensures its uniqueness. See Figure~\ref{fig:minus1}.

Now for $n-1 > i > 1$ removing $T_i$ moves the TMTB to where $T_{i-1}$ descends to 'ground level' and $T_{i+1}$ ascends - this is ensured by carefully choosing the width of the spikes. See Figures~\ref{fig:minus2}, \ref{fig:minus3}, \ref{fig:minus4}, and \ref{fig:minus5}.

When removing $T_{n-1}$ the TMTB is localized where $T_{n-2}$ descends. See Figure~\ref{fig:minus6}.

\begin{figure*}
    \centering
    % [trim={left bottom right top},clip]
    \includegraphics[trim={2.4cm 2.9cm 1.85cm 2.9cm},clip,scale = 0.7]{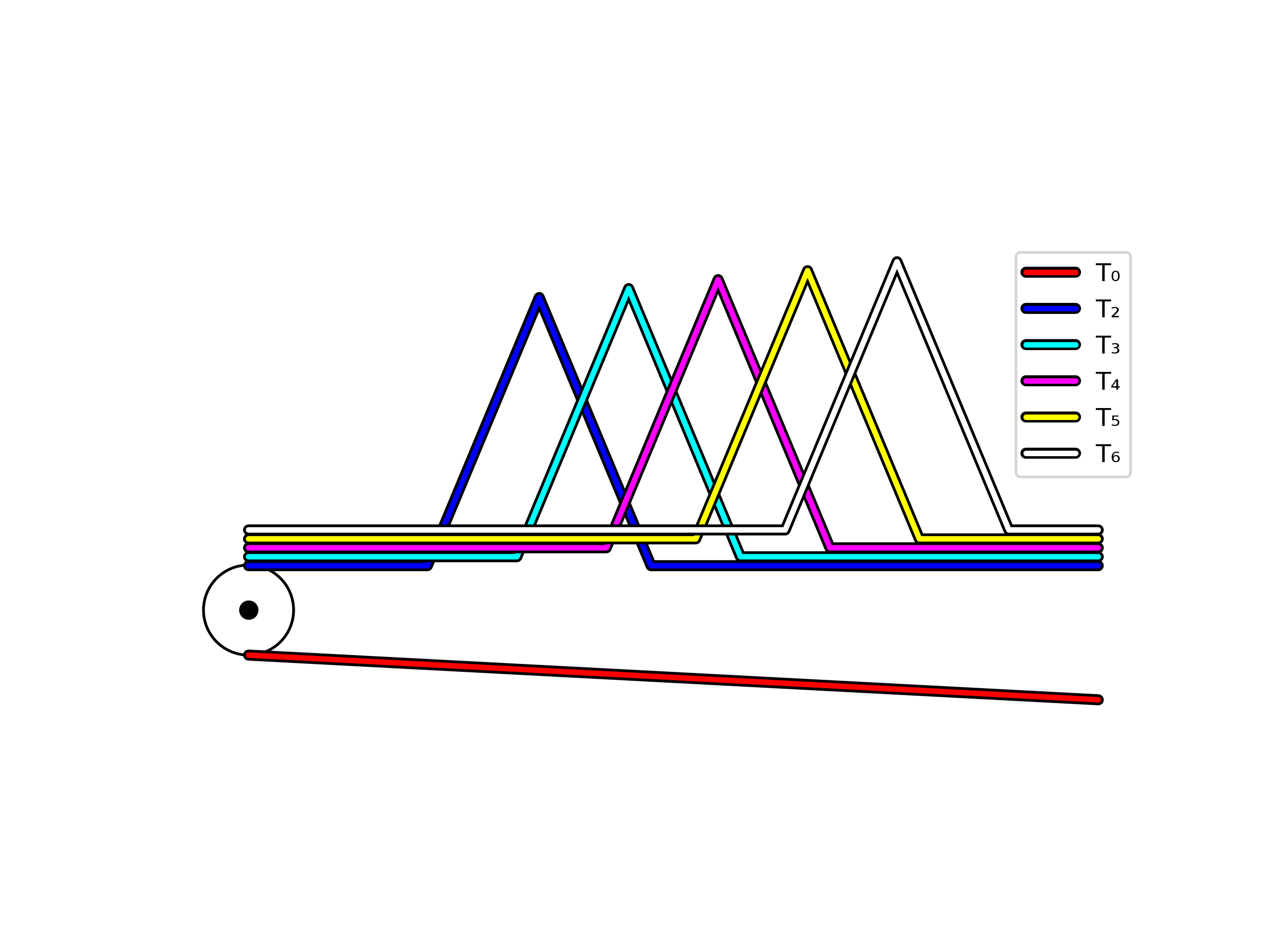}
    \includegraphics[trim={2.4cm 2.9cm 2cm 2.9cm},clip,scale = 0.7]{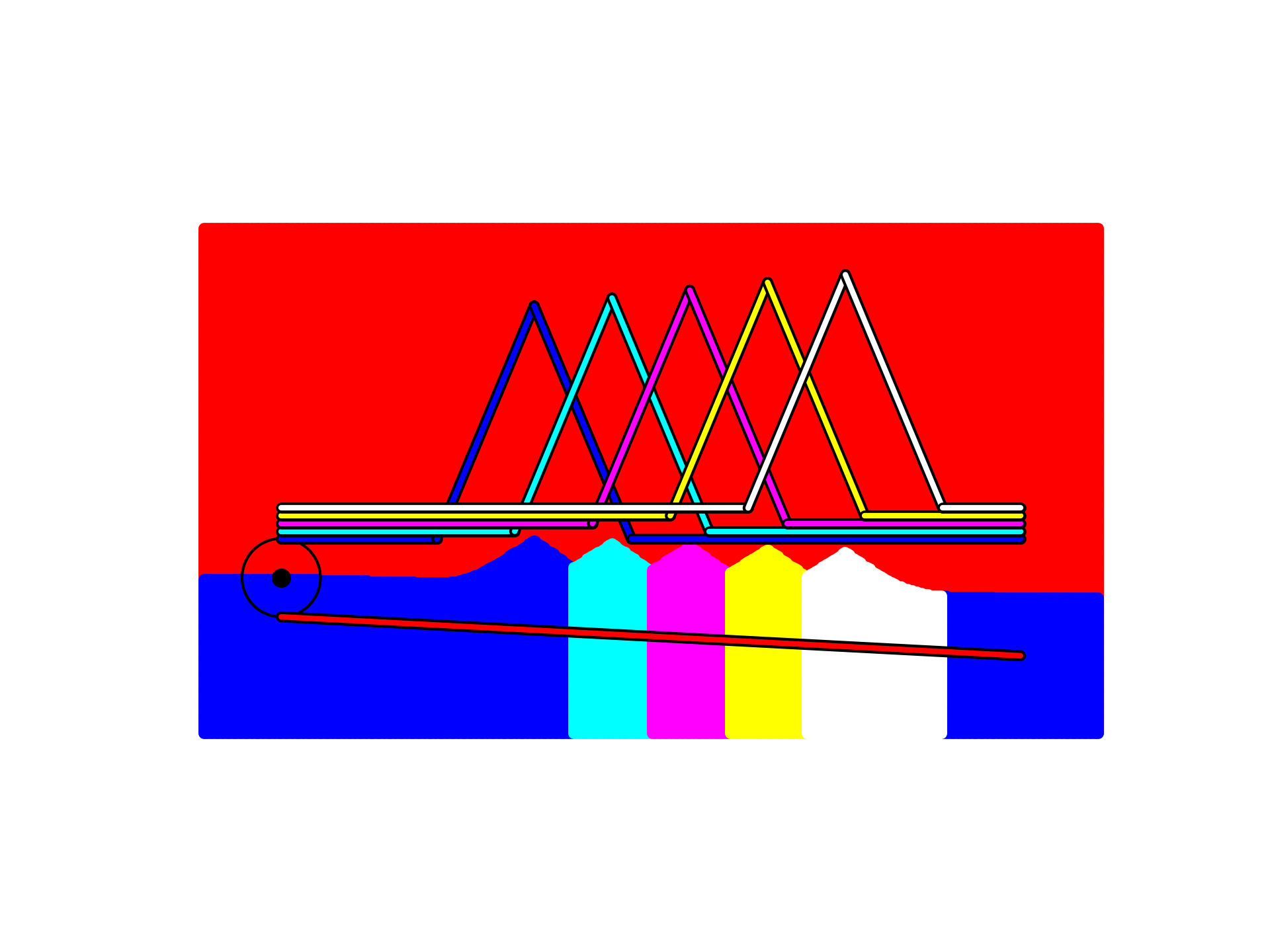}
    \caption{The construction of Lemma~\ref{lemma:traj_lp_monster}, without $T_1$. Left, a simple visualization, and right, with the Farthest Trajectory Voronoi Diagram.}
    \label{fig:minus1}
\end{figure*}

\begin{figure*}
    \centering
    % [trim={left bottom right top},clip]
    \includegraphics[trim={2.4cm 2.9cm 1.85cm  2.9cm},clip,scale = 0.7]{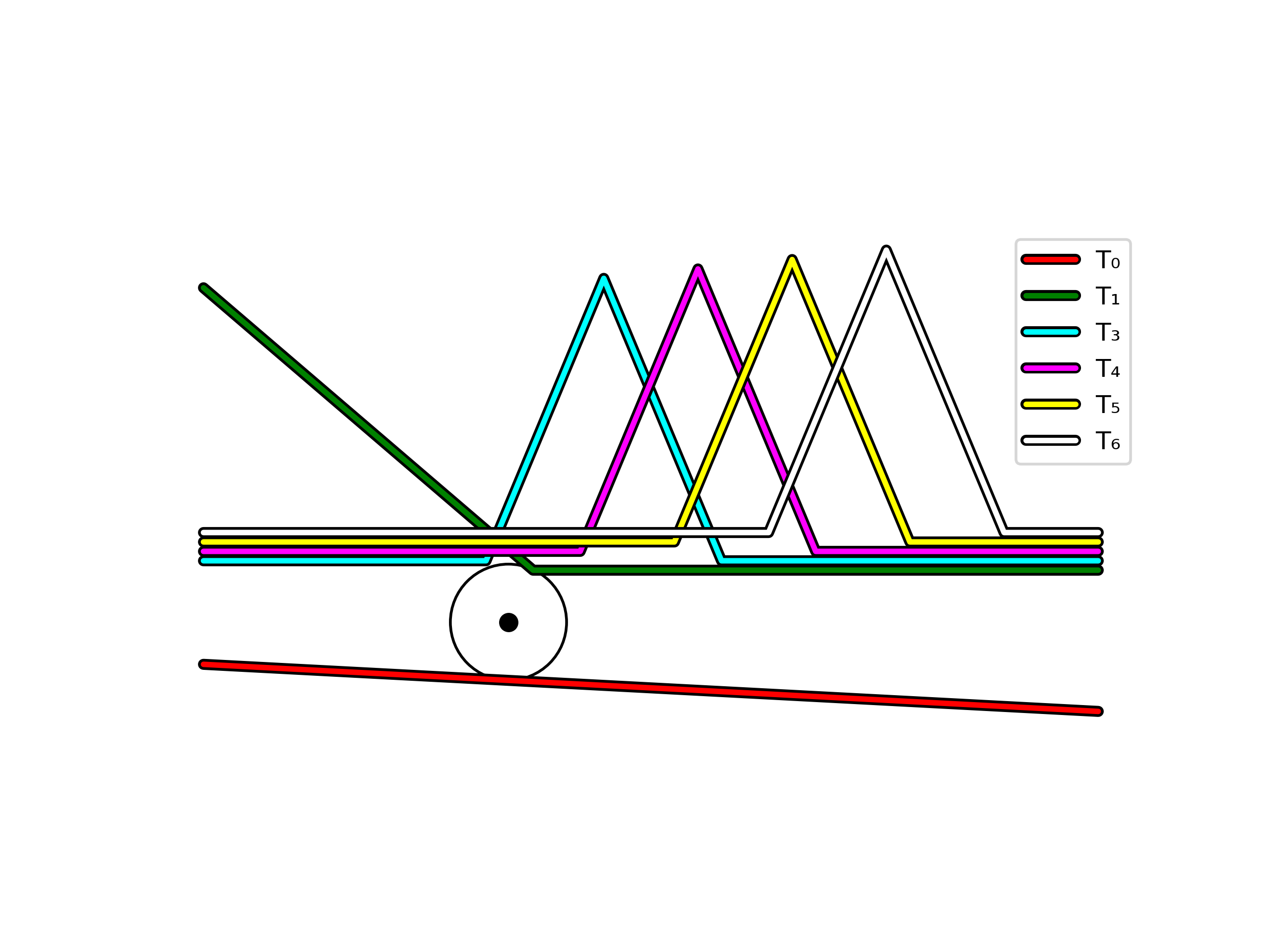}
    \includegraphics[trim={2.4cm 2.9cm 2cm 2.9cm},clip,scale = 0.7]{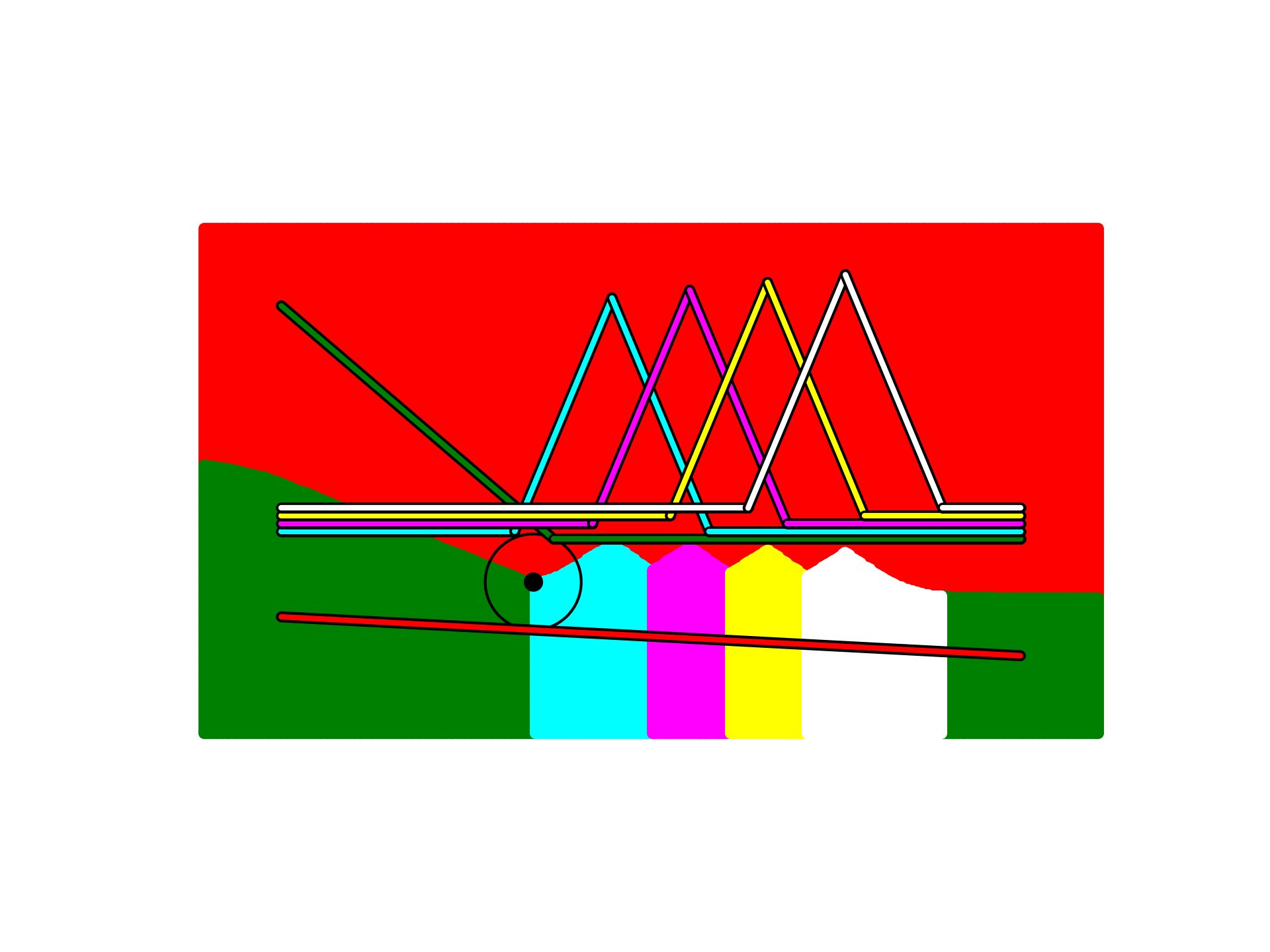}
    \caption{The construction of Lemma~\ref{lemma:traj_lp_monster}, without $T_2$. Left, a simple visualization, and right, with the FTVD.}
    \label{fig:minus2}
\end{figure*}
\begin{figure*}
    \centering
    % [trim={left bottom right top},clip]
    \includegraphics[trim={2.4cm 2.9cm 1.85cm  2.9cm},clip,scale = 0.7]{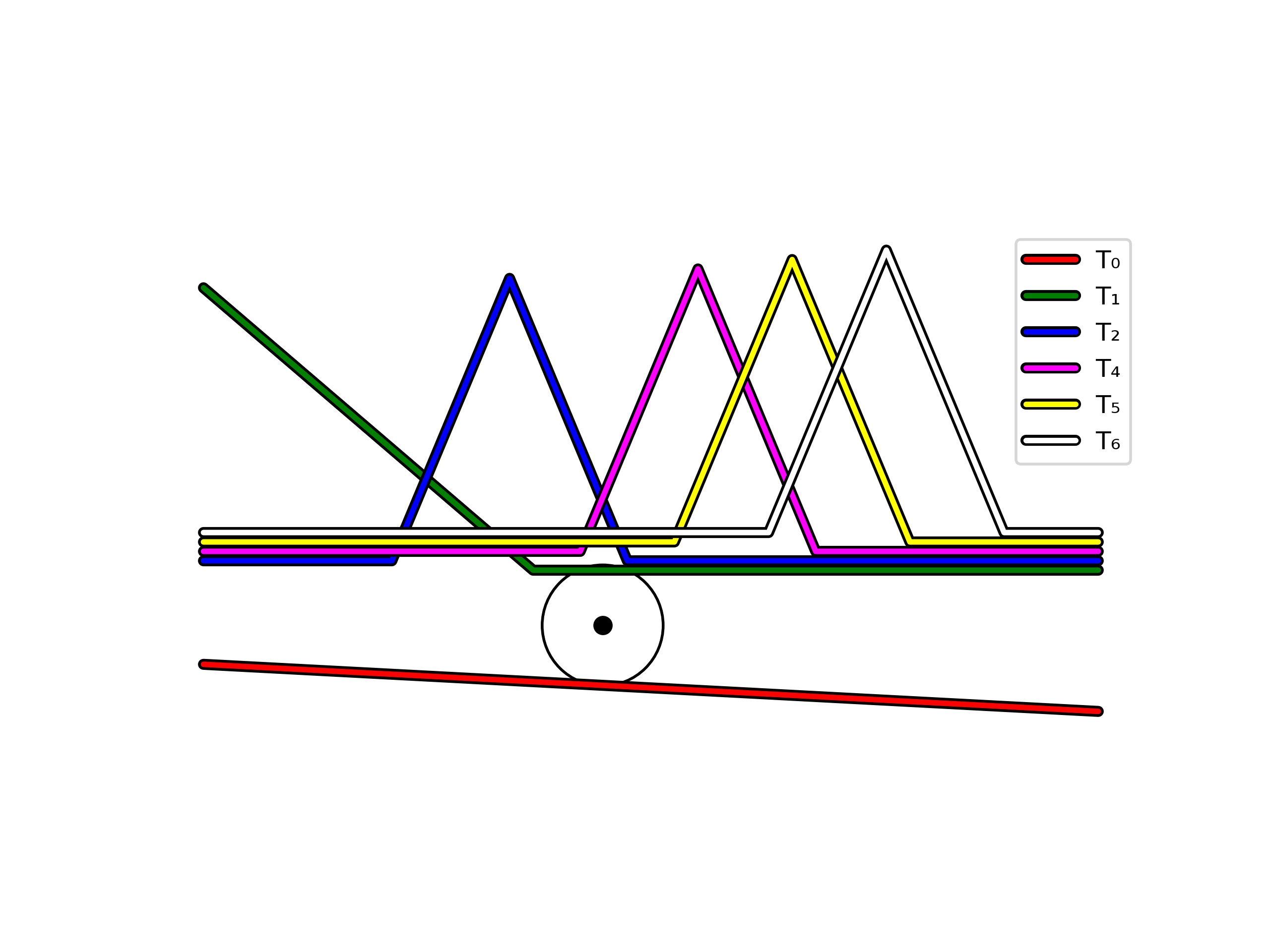}
    \includegraphics[trim={2.4cm 2.9cm 2cm 2.9cm},clip,scale = 0.7]{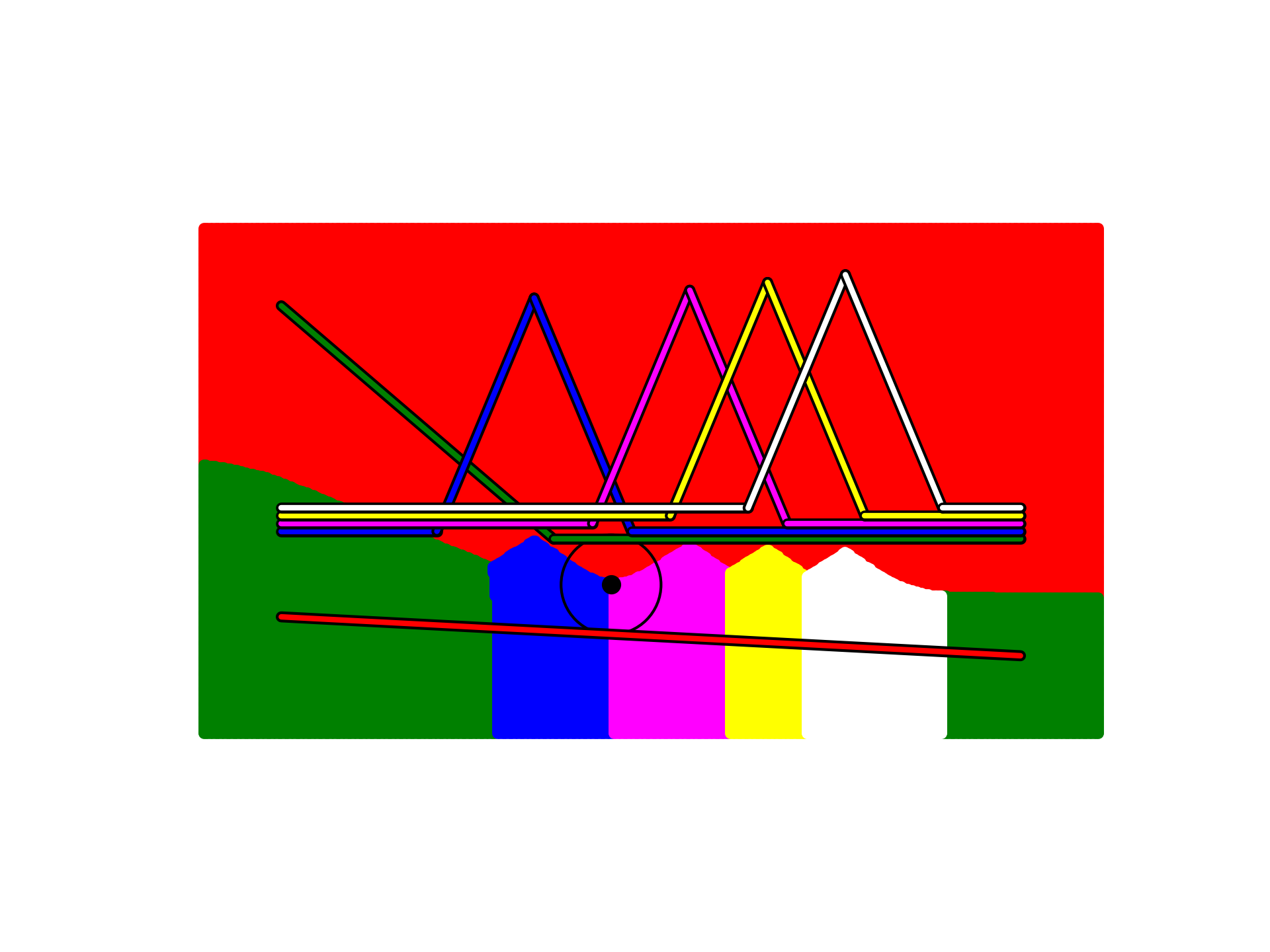}
    \caption{The construction of Lemma~\ref{lemma:traj_lp_monster}, without $T_3$. Left, a simple visualization, and right, with the FTVD.}
    \label{fig:minus3}
\end{figure*}

\begin{figure*}
    \centering
    % [trim={left bottom right top},clip]
    \includegraphics[trim={2.4cm 2.9cm 1.85cm  2.9cm},clip,scale = 0.7]{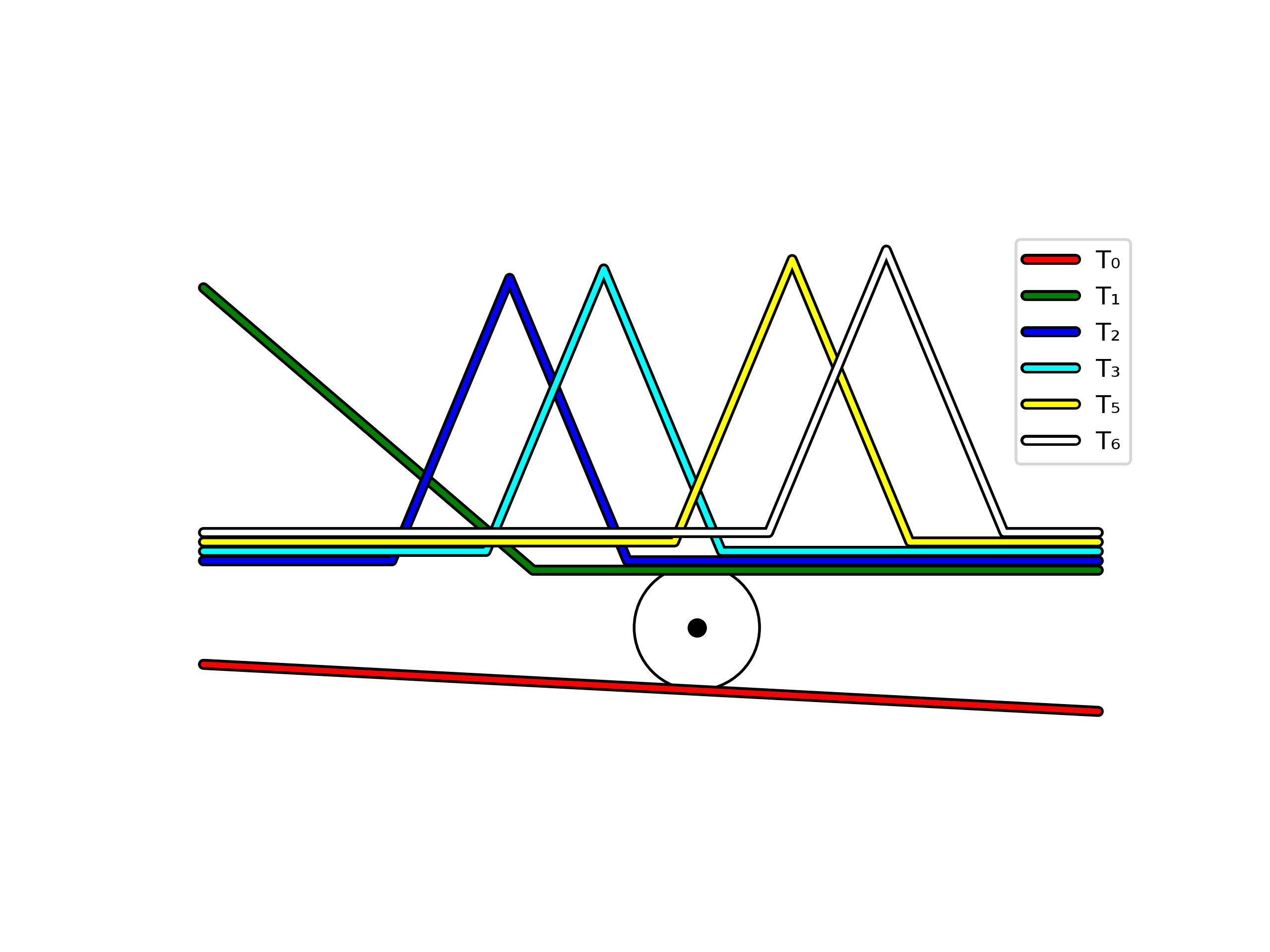}
    \includegraphics[trim={2.4cm 2.9cm 2cm 2.9cm},clip,scale = 0.7]{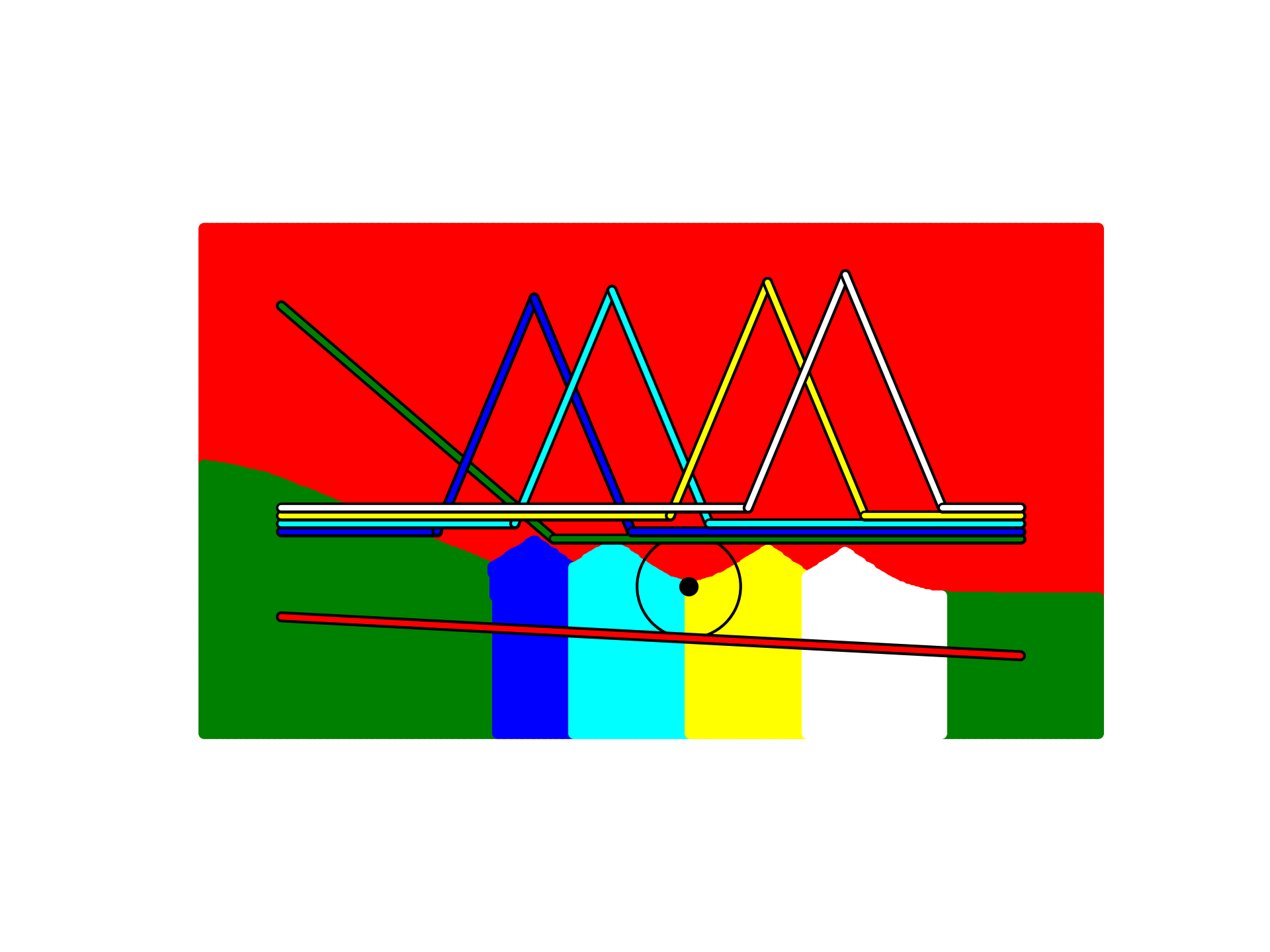}
    \caption{The construction of Lemma~\ref{lemma:traj_lp_monster}, without $T_4$. Left, a simple visualization, and right, with the FTVD.}
    \label{fig:minus4}
\end{figure*}

\begin{figure*}
    \centering
    % [trim={left bottom right top},clip]
    \includegraphics[trim={2.4cm 2.9cm 1.85cm  2.9cm},clip,scale = 0.7]{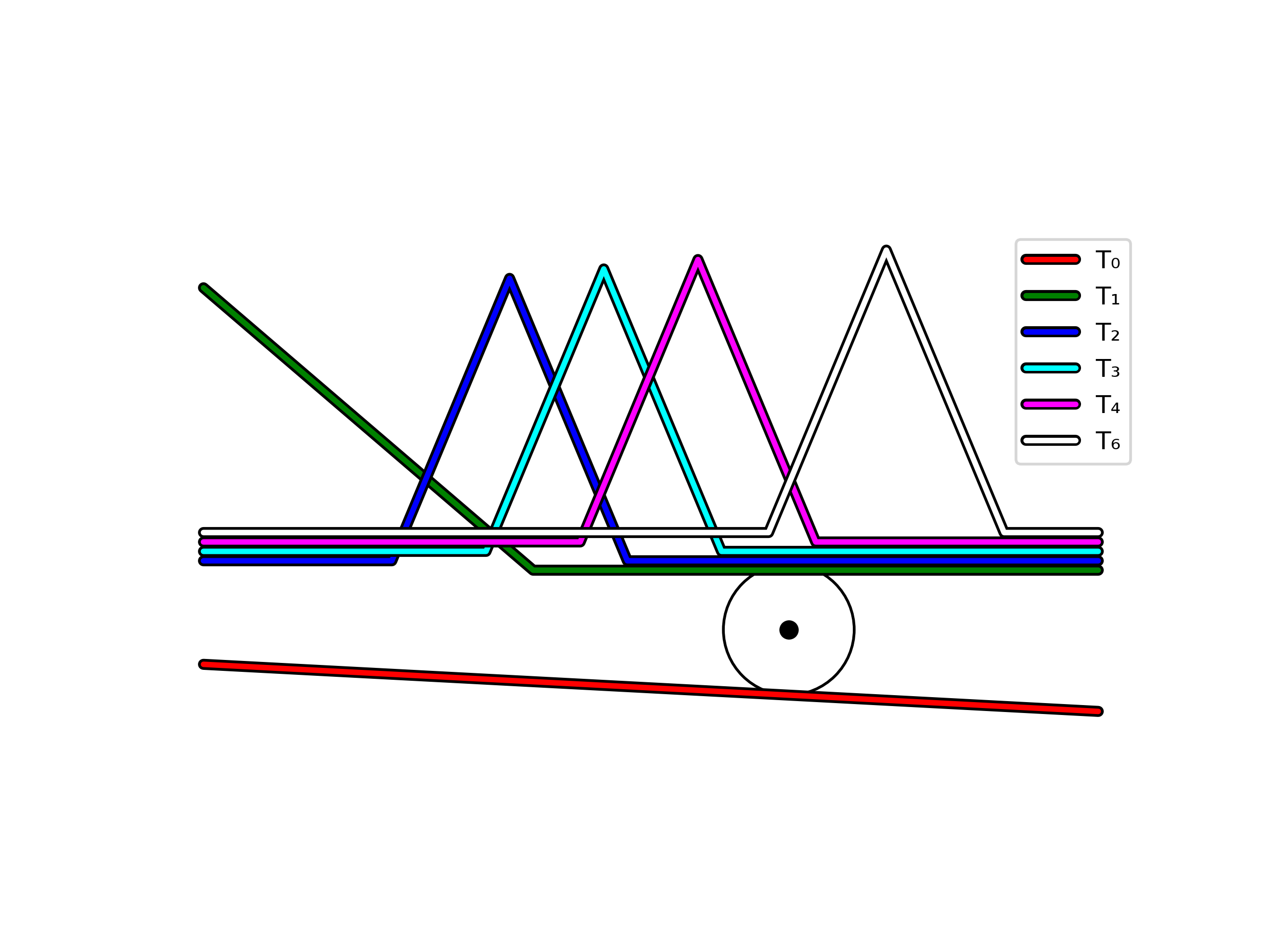}
    \includegraphics[trim={2.4cm 2.9cm 2cm 2.9cm},clip,scale = 0.7]{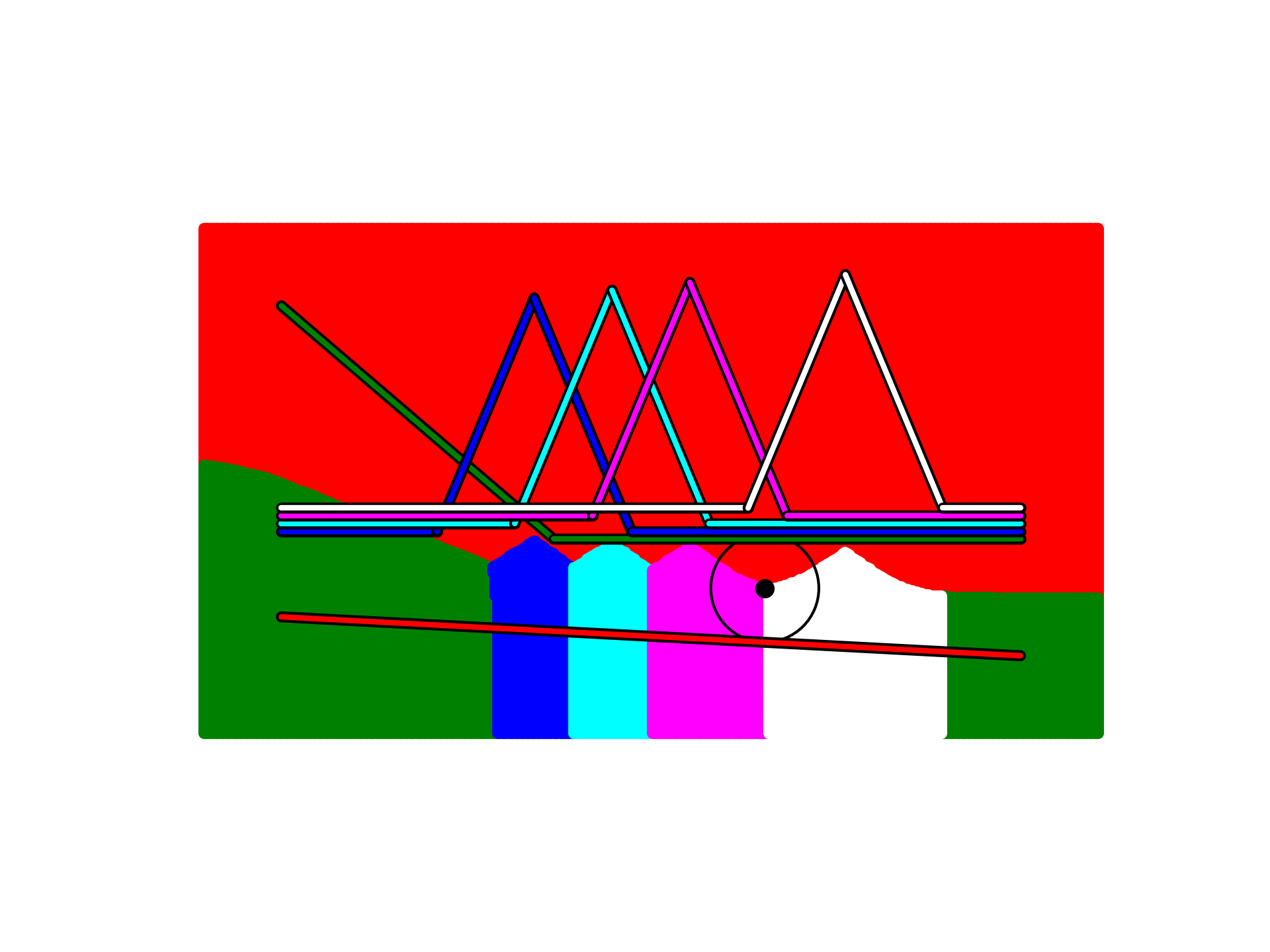}
    \caption{The construction of Lemma~\ref{lemma:traj_lp_monster}, without $T_5$. Left, a simple visualization, and right, with the FTVD.}
    \label{fig:minus5}
\end{figure*}

\begin{figure*}
    \centering
    % [trim={left bottom right top},clip]
    \includegraphics[trim={2.4cm 2.9cm 1.85cm  2.9cm},clip,scale = 0.7]{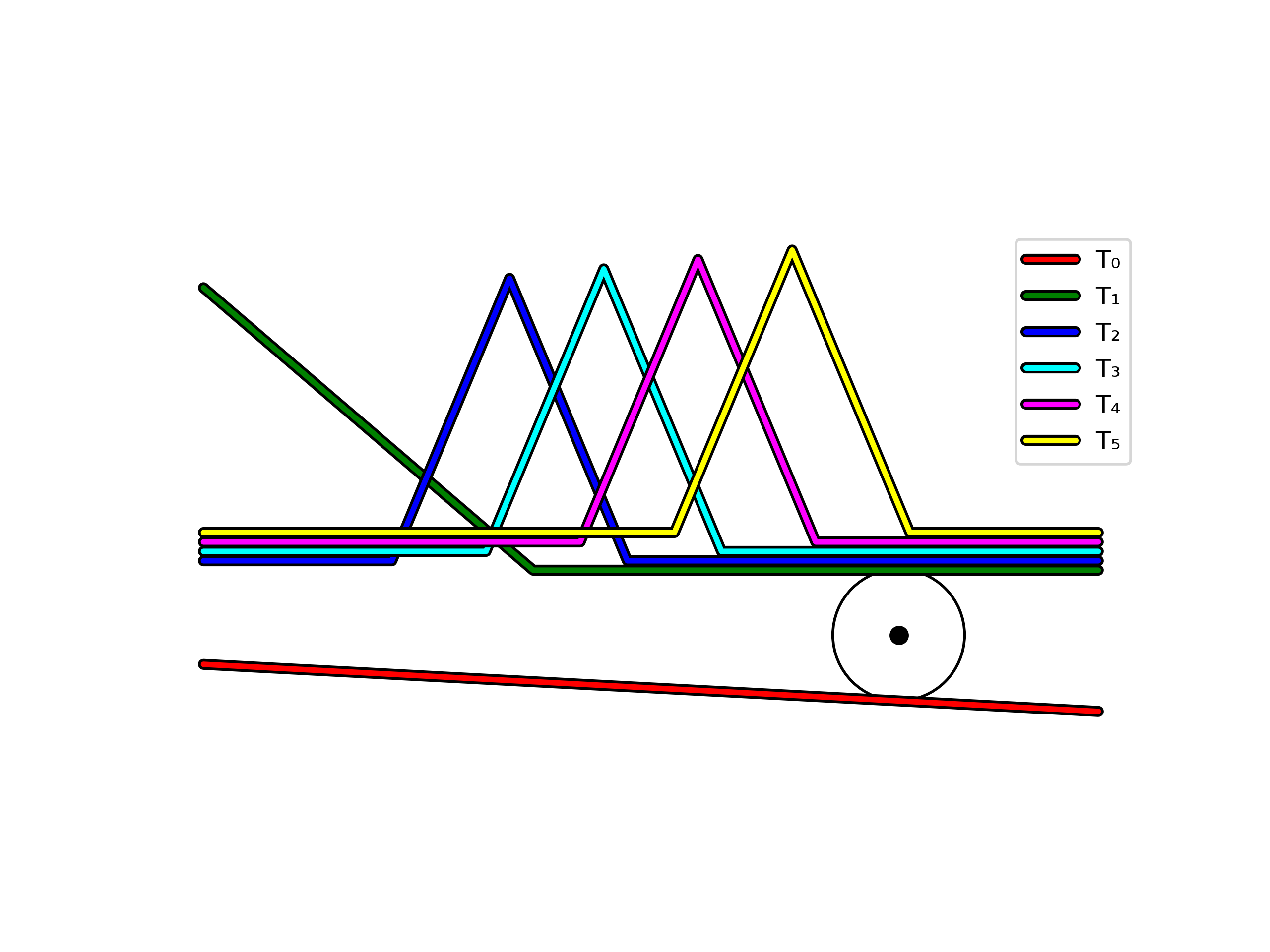}
    \includegraphics[trim={2.4cm 2.9cm 2cm 2.9cm},clip,scale = 0.7]{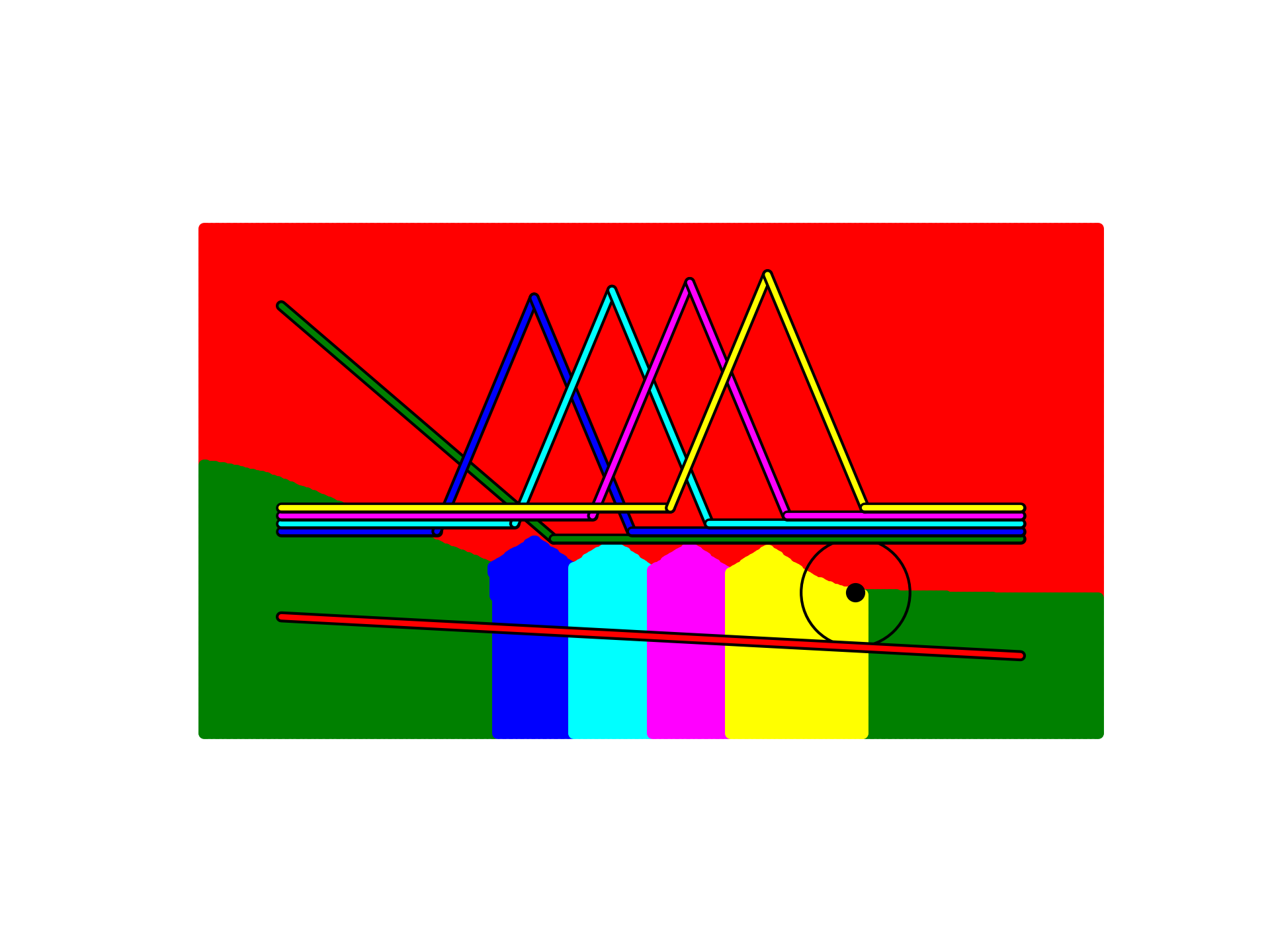}
    \caption{The construction of Lemma~\ref{lemma:traj_lp_monster}, without $T_6$. Left, a simple visualization, and right, with the FTVD.}
    \label{fig:minus6}
\end{figure*}

\end{document}